\newtheorem{theorem}{Theorem}
\newtheorem{lemma}{Lemma}
\newtheorem{corollary}{Corollary}
\theoremstyle{definition}
\newtheorem{example}{Example}
\newtheorem{remark}{Remark}
\renewcommand*\env@matrix[1][c]{\hskip -\arraycolsep
  \let\@ifnextchar\new@ifnextchar
  \array{*\c@MaxMatrixCols #1}}
\newcommand{\Fb}{\mathbbmss{F}} 
\newcommand{\tth}{{\text{th}}} 
\newcommand{\rx}{{\sf Rx}}
\newcommand{\tr}{{\sf T}} 
\newcommand{\ravg}{{r_{\rm avg}}}
\newcommand{\Db}{{\mathcal{D}}}
\newcommand{\Kb}{{\mathcal{K}}}
\newcommand{\Sc}{{\mathcal{S}}}
\newcommand{\Mc}{{\mathcal{M}}}
\newcommand{\supp}{{\sf supp}}
\newcommand{\spp}{{\sf span}}
\newcommand{\colsp}{{\mathcal{C}}}
\newcommand{\nullsp}{{\mathcal{N}}}
\newcommand{\rank}{{\rm rank}}
\newcommand{\minrank}{{{\rm minrk}_q}}
\newcommand{\Vin}{{V_{\sf in}}}
\newcommand{\Vout}{{V_{\sf out}}}
\title{Locality in Index Coding for Large Min-Rank}
\author{Lakshmi Natarajan, Hoang Dau, Prasad Krishnan and V.\ Lalitha}
\begin{document}

\maketitle

\begin{abstract}
\boldmath
An index code is said to be \emph{locally decodable} if each receiver can decode its demand using its side information and by querying only a subset of the transmitted codeword symbols instead of observing the entire codeword. 
Local decodability can be a beneficial feature in some communication scenarios, such as when the receivers can afford to listen to only a part of the transmissions because of limited availability of power.
The \emph{locality} of an index code is the ratio of the maximum number of codeword symbols queried by a receiver to the message length. 
In this paper we analyze the optimum locality of linear codes for the family of index coding problems whose min-rank is one less than the number of receivers in the network. 
We first derive the optimal trade-off between the index coding rate and locality with vector linear coding when the side information graph is a directed cycle. 
We then provide the optimal trade-off achieved by scalar linear coding for a larger family of problems, viz., problems where the min-rank is only one less than the number of receivers.
While the arguments used for achievability are based on known coding techniques, the converse arguments rely on new results on the structure of locally decodable index codes.
 
\end{abstract}

\section{Introduction} \label{sec:introduction}

Index coding~\cite{YBJK_IEEE_IT_11} is a central problem in network coding theory, because of its applications, such as in video-on-demand and daily newspaper delivery~\cite{BiK_INFOCOM_98}, and because of its strong relation to other coding theoretic problems, such as network coding~\cite{RSG_IEEE_IT_10,EEL_IT_15}, coded caching~\cite{MaN_IT_14}, codes for distributed data storage~\cite{Maz_ISIT_14,ShD_ISIT_14} etc.
The index coding problem is to design a code for a broadcast channel where each receiver has prior side information of a subset of messages being transmitted. 
The objective is to minimize the number of uses of the broadcast channel, or equivalently, the \emph{broadcast rate}. 

\let\thefootnote\relax\footnotetext{Dr.\ Natarajan is with the Department of Electrical Engineering, Indian Institute of Technology Hyderabad, email: lakshminatarajan@iith.ac.in. 

Dr.\ Dau is with the Department of Electrical and Computer Systems Engineering, Monash University, Melbourne, Australia, email: hoang.dau@monash.edu.

Dr.\ Krishnan and Dr.\ Lalitha are with the Signal Processing \& Communications Research Center, International Institute of Information Technology Hyderabad, India, email:\{prasad.krishnan,\,lalitha.v\}@iiit.ac.in.
}

Conventional index coding solutions in the literature require each receiver to observe the entire transmitted codeword in order to decode its demand. 
If the network involves a large number of receivers the number of transmissions that each receiver has to observe could be significantly larger than the size of the message itself.
Thus conventional index coding solutions could be unfavorable in certain applications, such as when the power available at the wireless receivers is limited and they can not afford to listen to radio transmissions for an extended period of time. 
In such scenarios, it is desirable to use \emph{locally decodable index codes}~\cite{HaL_ISIT_12}, which provide reductions in broadcast rate while requiring the receivers to query only a part of the transmitted codeword.
The \emph{locality} of an index code is the ratio of the number of codeword symbols queried by a receiver to the number of message symbols it demands~\cite{NKL_ISIT_18}.
The objective of locally decodable index coding is to minimize both broadcast rate and locality simultaneously, and achieve the optimal trade-off between these two parameters.

To the best of our knowledge, the idea of local decodability in index coding was introduced in~\cite{HaL_ISIT_12} where the broadcast rates of random index coding problems, modeled as random graphs, were analyzed under a locality requirement. 
The results in~\cite{HaL_ISIT_12} and~\cite{NKL_ISIT_18} characterize the optimal broadcast rate of an arbitrary index coding problem when locality is set to the minimum possible value, which is unity. Constructions of index codes with locality greater than one were given in~\cite{NKL_ISIT_18}.
Locally decodable index codes were shown to be related to privacy in index coding in~\cite{KSCF_arXiv_18} and studied under the terminology `$k$-limited access schemes'. The authors of~\cite{KSCF_arXiv_18} provide constructions that modify any given binary scalar linear index code into a locally decodable scalar linear code at the cost of increased broadcast rate.

Determining the optimal trade-off between broadcast rate and locality of a given index coding problem is yet to be addressed in the literature. 
This will not only involve designing good achievability schemes but also formulating tight lower bounds on locality and rate. 

In this paper we consider linear index codes for the family of index coding problems whose \emph{min-rank} is one less than the number of receivers in the problem.
We consider the problems where the side information graph is a directed cycle, and derive the optimal trade-off between rate and locality when vector linear codes are used. We also analyse the dependence of locality on the length of the vector messages in vector linear index codes for directed cycles (Section~\ref{sec:directed_cycles}).
We then consider the larger class of index coding problems, viz., problems whose min-rank is one less than the number of receivers, and provide the exact trade-off between rate and locality when scalar linear codes are used (Section~\ref{sec:minrank_N_minus_1}).
Note that directed cycles are a subset of this larger class of problems.
The derivation of the optimal trade-off provided in Sections~\ref{sec:directed_cycles} and~\ref{sec:minrank_N_minus_1} rely on new results on the structural properties of locally decodable index codes given in Section~\ref{sec:structure}.

The achievability schemes used in this paper are based on known index coding techniques for directed cycles. 
However, one of the main contributions of this paper is the development of new tools to derive good lower bounds on rate and locality that are vital in proving the optimality of these schemes. 

\noindent
\emph{Notation:} For any positive integer $N$, we will denote the set $\{1,\dots,N\}$ by $[N]$. Matrices and column vectors are denoted by bold upper and lower case letters, respectively, such as $\pmb{A}$ and $\pmb{x}$. The finite field of size $q$ is denoted as $\Fb_q$. The subspace spanned by vectors $\pmb{u}_1,\dots,\pmb{u}_N$ is denoted by $\spp(\pmb{u}_1,\dots,\pmb{u}_N)$. 
The column space of a matrix $\pmb{A}$ is denoted as $\colsp(A)$ and the null space of $\pmb{A}$ is $\nullsp(\pmb{A})=\{\pmb{x}|\pmb{Ax}=\pmb{0}\}$.
The support set of a vector $\pmb{x}$ is denoted as $\supp(\pmb{x})$.

\section{System Model \& Preliminaries} \label{sec:system_model}

We consider index coding for a broadcast channel consisting of $N$ receivers $\rx_1,\dots,\rx_N$. The transmitter holds $N$ messages $\pmb{x}_1,\dots,\pmb{x}_N$ where the $i^\tth$ message is demanded by $\rx_i$, and the messages $\pmb{x}_j$, $j \in K_i$, are known at this receiver as side information, where $K_i \subset [N]$. 
The side information graph $G=(\mathcal{V},\mathcal{E})$ is the directed graph with vertex set $\mathcal{V} = [N]$ and edge set $\mathcal{E}=\{(i,j)\,|\, i \in [N], j \in K_i\}$, and it completely specifies the index coding problem.

We are interested in linear index codes, and hence, we will assume that each message $\pmb{x}_i$ is a vector of length $M$ over a finite field $\Fb_q$. 
Note that for scalar linear index coding problems the message length $M=1$.
The $M$ components of the $i^\tth$ message vector $\pmb{x}_i$ are denoted as $x_{i,1},\dots,x_{i,M}$.
Encoding is performed by first concatenating the $N$ messages into $\pmb{x} = (\pmb{x}_1^\tr,\dots,\pmb{x}_N^\tr)^\tr \in \Fb_q^{MN}$ and multiplying this vector with an encoding matrix $\pmb{L} \in \Fb_q^{MN \times \ell}$ to generate a length $\ell$ the codeword $\pmb{c}^\tr = \pmb{x}^\tr \pmb{L}$.
Note that the $MN$ components of the concatenated vector $\pmb{x}=(x_1,\dots,x_{MN})^\tr$ and the components of the individual message vectors are related as $x_{(i-1)M+m} = x_{i,m}$ for $i \in [N]$ and $m \in [M]$.
The code length corresponding to the encoding matrix $\pmb{L}$ is $\ell$ and the broadcast rate is $\beta = \frac{\ell}{M}$.

Unlike the conventional index coding scenario where each receiver is required to query or download the entire codeword $\pmb{c}$, we allow the receivers to query only a part of the transmitted codeword in order to decode their demands.
Index codes that satisfy this property are called \emph{locally decodable}~\cite{HaL_ISIT_12}.
We will assume that $\rx_i$ queries the subvector $\pmb{c}_{R_i}=(c_j, j \in R_i)$, where $R_i \subseteq [\ell]$ is chosen in such a way that $\rx_i$ can decode $\pmb{x}_i$ using $\pmb{c}_{R_i}$ and the available side information $\pmb{x}_j$, $j \in K_i$.
The \emph{locality} of $\rx_i$ is $r_i = \frac{|R_i|}{M}$. Since $\rx_i$ demands message symbols $x_{i,1},\dots,x_{i,M}$, it needs to query at least $M$ components of the codeword to be able to decode them, and hence $r_i \geq 1$. The \emph{overall locality} or simply the \emph{locality} of the index code is $r = \max_{i \in [N]} r_i$, and the \emph{average locality} is 
\begin{equation} \label{eq:ravg}
\ravg = \sum_{i \in [N]}\frac{r_i}{N} = \sum_{i \in [N]} \frac{|R_i|}{MN}.
\end{equation} 
Observe that the average locality is upper bounded by overall locality
$\ravg \leq r$.

\begin{example}[\emph{A simple scalar linear code for directed cycles}] \label{ex:simple_code_cycle}
Let the message length $M=1$ and let $G$ be a directed cycle of length $N$, i.e., $K_i=\{i+1\}$ for $i \in [N-1]$ and $K_N=\{1\}$. Consider the $N \times (N-1)$ encoder matrix
\begin{equation*}
\pmb{L} = \begin{bmatrix} 
          1 & 1 & \cdots & 1 \\
          1 & 0 & \cdots & 0 \\
          0 & 1 & \cdots & 0 \\
          \vdots &  &  & \vdots \\
          0 & 0 & \cdots & 1
          \end{bmatrix},
\end{equation*} 
that generates the codeword 
\begin{equation*}
\pmb{c} = (c_1,\dots,c_{N-1}) = \pmb{x}^\tr\pmb{L} = (x_1+x_2,x_1+x_3,\dots,x_1+x_N).
\end{equation*} 
$\rx_1$ and $\rx_N$ can decode their demands by querying $c_1$ and $c_{N-1}$, respectively, hence, $|R_1|=|R_N|=1$. For $1 < i < N$, receiver $\rx_i$ queries the symbols $c_{i-1}=x_1+x_i$ and $c_{i}=x_1 + x_{i+1}$, and uses its side information $x_{i+1}$ to compute $c_{i-1} - c_i - x_{i+1}$, which equals its demand $x_i$. Hence, $|R_i| = 2$ for $1 < i < N$. Since $M=1$, the locality of each receiver $r_i=1$ if $i=1$ or $N$, and $r_i=2$ otherwise. The overall locality $r=2$ and the average locality $\ravg=2(N-1)/N$.

Since the graph $G$ is symmetric, for any choice of $i \in [N-1]$, the above coding scheme can be modified by an appropriate permutation of the rows of $\pmb{L}$ to allow receiver localities $r_i=1$ and $r_{i+1}=1$ and locality $r_j=2$ at all other receivers $j \neq i,i+1$. 
\end{example}

We would like to characterize the trade-off between the broadcast rate $\beta$ and the locality $r$ of linear index codes over $\Fb_q$ for a given index coding problem $G$. We define the optimum locality-rate trade-off among all linear index codes for $G$ over $\Fb_q$ as
\begin{equation*}
 \beta^*_{G,q}(r) = \inf \left\{ \beta \, | \, \exists \text{ a linear code of rate } \beta, \text{ locality} \leq r  \right\},
\end{equation*} 
where the infimum considers linear index codes over all possible message lengths $M \geq 1$ for the index coding problem $G$.


We would like to view the vector linear index coding problem involving $N$ vector messages $\pmb{x}_1,\dots,\pmb{x}_N \in \Fb_q^M$ as a scalar linear problem defined over $MN$ scalar messages $x_1,\dots,x_{MN} \in \Fb_q$. In this case the $i^\tth$  receiver demands $M$ scalar messages $x_{(i-1)M+1},x_{(i-1)M+2},\dots,x_{iM}$, which correspond to the $M$ components of the vector $\pmb{x}_i$. This set of demands of $\rx_i$ is represented by the index set $\Db_i = \{(i-1)M+m \,|\, m \in [M] \}$. The scalar symbols available as side information at $\rx_i$ correspond to the index set 
\begin{equation*}
 \Kb_i = \left\{ (j-1)M + m \, \vert \, j \in K_i, m \in [M] \right\}.
\end{equation*} 
Thus the vector linear problem corresponding to the side information graph $G$ with message length $M$, is equivalent to a scalar linear problem with $N$ receivers and $MN$ messages, where the index set of demands of $\rx_i$ is $\Db_i$ and the set corresponding to the side information at $\rx_i$ is $\Kb_i$. 
Note that $\Db_i$ and $\Kb_i$ are subsets of $[MN]$.

\section{Structure of Locally Decodable Index Codes} \label{sec:structure}

We will first derive some properties of locally decodable (vector) linear index codes (codes for any message length $M \geq 1$) in Section~\ref{sec:linear_structure} and then analyze scalar linear codes (message length $M=1$) specifically in Section~\ref{sec:scalar_structure}. 
These properties will be useful in deriving the rate-locality trade-off results presented in later sections. 

Following the notation from~\cite{DSC_IT_12}, for a vector $\pmb{u} \in \Fb_q^{MN}$ and set $E \subset [MN]$, we write $\pmb{u} \lhd E$ to denote $\supp(\pmb{u}) \subseteq E$.
Let us denote the columns of the encoder matrix $\pmb{L}$ as $\pmb{L}_1,\dots,\pmb{L}_\ell \in \Fb_q^{MN}$. Then the $k^\tth$ symbol of the codeword is $c_k = \pmb{x}^\tr\pmb{L}_k$.
Note that the $i^\tth$ receiver queries the subvector $\pmb{c}_{R_i}=(c_k,k \in R_i)$, and utilizes the side information $\pmb{x}_{\Kb_i}=(x_j, j \in \Kb_i)$, to decode the demand $\pmb{x}_{\Db_i}=(x_j, j \in \Db_i)$. 

\subsection{Locally decodable linear index codes} \label{sec:linear_structure}

We observe that the proofs of Lemmas~3.1 and~4.3 and Corollary~4.4 of~\cite{DSC_IT_12} can be directly adapted to the scenario of locally decodable index codes, immediately yielding the following constraints on the encoder matrix.
Let $\pmb{e}_1,\dots,\pmb{e}_{MN}$ be the standard basis of $\Fb_q^{MN}$.

\begin{theorem} \label{thm:design_criterion}
For each $i \in [N]$, let $\rx_i$ query the subvector $\pmb{c}_{R_i}$ of the codeword $\pmb{c}=\pmb{x}^\tr\pmb{L}$ and have the side information $\pmb{x}_{\Kb_i}$. Then $\rx_i$ can decode its demand $\pmb{x}_{\Db_i}$ if and only if for each $j \in \Db_i$ there exists a $\pmb{u}_j \in \Fb_q^{MN}$ that $\pmb{u}_j \lhd \Kb_i$ and $\pmb{u}_j + \pmb{e}_j \in \spp(\pmb{L}_k, k \in R_i)$.
\end{theorem}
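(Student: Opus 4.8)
The plan is to reduce the decodability question to a statement in linear algebra about the span of the queried columns, mirroring the approach of Lemmas~3.1 and~4.3 of~\cite{DSC_IT_12} but carried out in the locally decodable setting where only the columns indexed by $R_i$ are available. First I would fix a receiver $\rx_i$ and observe that, from the queried symbols $\pmb{c}_{R_i}$, the receiver has access precisely to the linear functionals $\pmb{x} \mapsto \pmb{x}^\tr \pmb{L}_k$ for $k \in R_i$, hence to every functional in the row space spanned by $\{\pmb{L}_k : k \in R_i\}$, i.e.\ to $\pmb{x}^\tr \pmb{w}$ for every $\pmb{w} \in \spp(\pmb{L}_k, k \in R_i)$. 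In addition, knowing the side information $\pmb{x}_{\Kb_i}$ means the receiver can compute $\pmb{x}^\tr \pmb{u}$ for every $\pmb{u} \lhd \Kb_i$, and can add any such term to a queried functional without changing what it knows. So the total information available to $\rx_i$ is exactly the set of values $\{\pmb{x}^\tr \pmb{v} : \pmb{v} \in \spp(\pmb{L}_k, k \in R_i) + \spp(\pmb{e}_j, j \in \Kb_i)\}$, together with the fixed quantities $x_j$ for $j \in \Kb_i$.

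Next I would argue the two directions separately. For the ``if'' direction, suppose that for each $j \in \Db_i$ there is $\pmb{u}_j \lhd \Kb_i$ with $\pmb{u}_j + \pmb{e}_j \in \spp(\pmb{L}_k, k \in R_i)$; then $\rx_i$ forms the queried linear combination equal to $\pmb{x}^\tr(\pmb{u}_j + \pmb{e}_j) = \pmb{x}^\tr \pmb{u}_j + x_j$, subtracts the computable term $\pmb{x}^\tr \pmb{u}_j$ (which uses only side-information coordinates), and recovers $x_j$; doing this for all $j \in \Db_i$ yields $\pmb{x}_{\Db_i}$. For the ``only if'' direction I would use the standard argument that a linear code decodes $x_j$ from a collection of linear functionals and the side information iff $\pmb{e}_j$ lies in the span of those functionals plus $\spp(\pmb{e}_{j'}, j' \in \Kb_i)$: concretely, if $\pmb{e}_j \notin \spp(\pmb{L}_k, k \in R_i) + \spp(\pmb{e}_{j'}, j' \in \Kb_i)$, then one can choose two message vectors agreeing on all coordinates in $\Kb_i$ and producing the same values of every queried functional, yet differing in the $j$-th coordinate, contradicting decodability. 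Writing $\pmb{e}_j = \pmb{w} - \pmb{u}_j$ with $\pmb{w} \in \spp(\pmb{L}_k, k \in R_i)$ and $-\pmb{u}_j \in \spp(\pmb{e}_{j'}, j' \in \Kb_i)$, i.e.\ $\pmb{u}_j \lhd \Kb_i$, gives exactly the claimed condition $\pmb{u}_j + \pmb{e}_j = \pmb{w} \in \spp(\pmb{L}_k, k \in R_i)$.

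The only subtlety, and the step I would be most careful about, is making the ``only if'' separation argument rigorous over an arbitrary finite field $\Fb_q$: one must exhibit, whenever $\pmb{e}_j$ is not in the relevant sum of subspaces, a genuine pair of inputs that are indistinguishable to $\rx_i$ but have different demanded symbol $x_j$. This is handled by a dimension/duality argument — pick $\pmb{y}$ in the orthogonal complement of $\spp(\pmb{L}_k, k \in R_i)$ that also vanishes on all coordinates in $\Kb_i$ but has $y_j \neq 0$ (such $\pmb{y}$ exists precisely because $\pmb{e}_j \notin \spp(\pmb{L}_k, k \in R_i) + \spp(\pmb{e}_{j'}, j'\in \Kb_i)$), and compare $\pmb{x}$ with $\pmb{x} + \pmb{y}$. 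Since this is verbatim the argument in Lemma~3.1 and Corollary~4.4 of~\cite{DSC_IT_12} with the full generator matrix replaced by its restriction to columns $R_i$, and since restricting to a subset of columns does not affect any step of that argument, the adaptation is immediate; I would simply state this and refer the reader to~\cite{DSC_IT_12} for the mechanical details.
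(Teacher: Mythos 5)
Your proposal is correct and follows the same route the paper takes, namely adapting the arguments of Lemmas~3.1 and 4.3 and Corollary~4.4 of Dau--Skachek--Chee to the locally decodable setting by restricting the encoder to the queried columns $\{\pmb{L}_k : k \in R_i\}$. In fact the paper itself does not write out a proof beyond asserting that this adaptation is immediate; your filled-in version (the ``if'' direction by direct computation, and the ``only if'' direction by exhibiting an indistinguishable pair $\pmb{x}$, $\pmb{x}+\pmb{y}$ via a vector $\pmb{y}$ orthogonal to $\spp(\pmb{L}_k, k \in R_i) + \spp(\pmb{e}_{j'}, j' \in \Kb_i)$ with $y_j \neq 0$) is a faithful and sound rendering of exactly that argument.
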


We will say that $\pmb{L} \in \Fb_q^{MN \times \ell}$ is a valid encoder matrix corresponding to the queries $R_1,\dots,R_N \subseteq [\ell]$ if it satisfies the criterion stated in Theorem~\ref{thm:design_criterion} for decodability at all the receivers.

Observe that among the component symbols in the codeword $\pmb{c}=(c_1,\dots,c_\ell)^\tr$, some are queried exactly once, i.e., queried by a single receiver, and the other symbols are queried by multiple receivers in the network. Let 
$\Sc_i = R_i \setminus \left(R_1 \cup \cdots \cup R_{i-1} \cup R_{i+1} \cdots \cup R_N \right)$
denote the index set of coded symbols that are queried only by $\rx_i$. Also, let
$\Mc_i = R_i \cap \left(R_1 \cup \cdots \cup R_{i-1} \cup R_{i+1} \cdots \cup R_N \right)$
denote the index set of coded symbols that are queried by $\rx_i$ and at least one other receiver. Note that $R_i = \Sc_i \cup \Mc_i$ for each $i \in [N]$.

The following result shows that certain entries of the encoder matrix can be set to be equal to zero without affecting the locality or rate of the index code.

\begin{theorem} \label{thm:zeroes}
Let $\pmb{L} \in \Fb_q^{MN \times \ell}$ be a valid encoding matrix corresponding to the receivers' queries $R_1,\dots,R_N$. Then there exists a valid encoding matrix $\pmb{L}' \in \Fb_q^{MN \times \ell}$ for the queries $R_1,\dots,R_N$ such that for each $i \in [N]$
\begin{equation*}
 \pmb{L}'_k \lhd \Db_i \text{ for all } k \in \Sc_i.
\end{equation*} 
\end{theorem}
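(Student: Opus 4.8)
The plan is to modify $\pmb{L}$ one symbol at a time: for each index $k \in \Sc_i$, I will replace the column $\pmb{L}_k$ by its restriction to the coordinates in $\Db_i$, i.e.\ by the vector $\pmb{L}'_k$ obtained by zeroing out all entries of $\pmb{L}_k$ outside $\Db_i$. Since $k \in \Sc_i$ means the $k^\tth$ codeword symbol is queried \emph{only} by $\rx_i$, changing column $\pmb{L}_k$ affects the decodability criterion of Theorem~\ref{thm:design_criterion} only for receiver $\rx_i$; all other receivers $\rx_{i'}$, $i' \neq i$, have $k \notin R_{i'}$, so their spans $\spp(\pmb{L}_{k'}, k' \in R_{i'})$ are untouched and they still decode. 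Doing this independently over all $i \in [N]$ and all $k \in \Sc_i$ (the sets $\Sc_i$ being pairwise disjoint) yields the desired $\pmb{L}'$, provided each such modification preserves decodability at $\rx_i$ itself.

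So the crux is: after replacing $\pmb{L}_k$ by $\pmb{L}'_k = \pmb{L}_k|_{\Db_i}$ for all $k \in \Sc_i$, receiver $\rx_i$ can still decode $\pmb{x}_{\Db_i}$. I would argue this via Theorem~\ref{thm:design_criterion} as follows. For each $j \in \Db_i$ there is, by validity of $\pmb{L}$, a vector $\pmb{u}_j \lhd \Kb_i$ with $\pmb{u}_j + \pmb{e}_j = \sum_{k \in R_i} \alpha_k \pmb{L}_k$ for some scalars $\alpha_k$. Split $R_i = \Sc_i \cup \Mc_i$ and project this identity onto the coordinate set $\Db_i$. Observe that $\Kb_i \cap \Db_i = \emptyset$ (side information indices are disjoint from demand indices), so the restriction of $\pmb{u}_j$ to $\Db_i$ is $\pmb{0}$; and $\pmb{e}_j$ restricted to $\Db_i$ is still $\pmb{e}_j$ (as $j \in \Db_i$). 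Hence, writing $\pi$ for the coordinate projection onto $\Db_i$,
\begin{equation*}
\pmb{e}_j = \pi(\pmb{u}_j + \pmb{e}_j) = \sum_{k \in \Sc_i} \alpha_k\, \pi(\pmb{L}_k) + \sum_{k \in \Mc_i} \alpha_k\, \pi(\pmb{L}_k) = \sum_{k \in \Sc_i} \alpha_k\, \pmb{L}'_k + \sum_{k \in \Mc_i} \alpha_k\, \pi(\pmb{L}_k).
\end{equation*}
The second sum, $\sum_{k \in \Mc_i} \alpha_k\, \pi(\pmb{L}_k)$, I would rewrite as $\sum_{k \in \Mc_i} \alpha_k \pmb{L}_k$ minus a vector supported outside $\Db_i$; since $\Db_i \cup \Kb_i$ need not be all of $[MN]$, I need a place to absorb that leftover. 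The natural move is to define $\pmb{u}'_j$ to be $\pmb{e}_j - \sum_{k \in R_i}\alpha_k \pmb{L}'_k$ and check directly that $\pmb{u}'_j \lhd \Kb_i$: its $\Db_i$-coordinates vanish by the displayed identity, and on coordinates outside $\Db_i$ the terms with $k \in \Sc_i$ contribute nothing (by construction of $\pmb{L}'_k$) while the terms with $k \in \Mc_i$ satisfy $\pmb{L}'_k = \pmb{L}_k$, so on $[MN]\setminus \Db_i$ we get $-\sum_{k\in\Mc_i}\alpha_k\pmb{L}_k = \pmb{u}_j + \pmb{e}_j$ restricted there $ = \pmb{u}_j$ restricted there (as $j \in \Db_i$), which lies in $\Kb_i$. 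Thus $\pmb{u}'_j \lhd \Kb_i$ and $\pmb{u}'_j + \pmb{e}_j \in \spp(\pmb{L}'_k, k \in R_i)$, so Theorem~\ref{thm:design_criterion} gives decodability at $\rx_i$.

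The main obstacle I anticipate is the bookkeeping in the last step — carefully tracking, coordinate-region by coordinate-region ($\Db_i$, $\Kb_i$, and the complement of their union), which columns have been altered and which have not, and making sure the residual vector one defines is genuinely supported on $\Kb_i$ rather than merely on its complement of $\Db_i$. A clean way to avoid using $\pmb{L}_k = \pmb{L}'_k$ for $k \in \Mc_i$ prematurely is to do the $\Sc_i$-substitutions sequentially and note that each $\pmb{L}'_k$ with $k\in\Sc_i$ differs from $\pmb{L}_k$ only in coordinates outside $\Db_i$; the identity on $\Db_i$ then forces the matching correction on $\Kb_i$. I would also explicitly record the two disjointness facts used, $\Sc_i \cap \Sc_{i'} = \emptyset$ for $i \neq i'$ and $\Db_i \cap \Kb_i = \emptyset$, since the whole argument hinges on them.
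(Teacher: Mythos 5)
Your construction---replacing each column $\pmb{L}_k$, $k \in \Sc_i$, by the vector obtained by zeroing out its entries outside $\Db_i$---does not in general preserve decodability at $\rx_i$, so there is a genuine gap. The error is in the step where you conclude that, restricted to coordinates outside $\Db_i$, $-\sum_{k\in\Mc_i}\alpha_k\pmb{L}_k$ equals $\pmb{u}_j$. Restricting the \emph{original} identity $\pmb{u}_j+\pmb{e}_j=\sum_{k\in R_i}\alpha_k\pmb{L}_k$ to coordinates outside $\Db_i$ (where $\pmb{e}_j$ vanishes) gives that $\pmb{u}_j$ equals the sum of the $\Sc_i$-part and the $\Mc_i$-part of $\sum_{k}\alpha_k\pmb{L}_k$ restricted there; so the residual $\pmb{u}'_j$ you compute is $\pmb{u}_j$ \emph{minus} the $\Sc_i$-sum of the \emph{unmodified} columns restricted outside $\Db_i$. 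You discarded that extra term on the grounds that the modified columns $\pmb{L}'_k$, $k\in\Sc_i$, vanish outside $\Db_i$, but the term comes from the unmodified $\pmb{L}_k$, which may have support at coordinates outside $\Db_i\cup\Kb_i$. Nothing forces this leftover into $\Kb_i$, so the required property $\pmb{u}'_j\lhd\Kb_i$ can fail.

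This is not a repairable bookkeeping slip: the truncation construction itself is wrong. Take $N=4$, $M=1$, $K_1=\{2\}$, $K_2=\{1,3,4\}$, $K_3=\{1,2,4\}$, $K_4=\{1,2,3\}$. Over $\Fb_2$, let $\pmb{L}_1=(0,0,1,0)^\tr$, $\pmb{L}_2=(1,0,1,0)^\tr$, $\pmb{L}_3=(0,1,0,0)^\tr$, $\pmb{L}_4=(0,0,0,1)^\tr$, with queries $R_1=\{1,2\}$, $R_2=\{3\}$, $R_3=\{2\}$, $R_4=\{4\}$. Then $c_1=x_3$, $c_2=x_1+x_3$, $c_3=x_2$, $c_4=x_4$; every receiver decodes (in particular $\rx_1$ computes $c_2-c_1=x_1$), and $\Sc_1=\{1\}$, $\Mc_1=\{2\}$. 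Since $\Db_1=\{1\}$ and the first entry of $\pmb{L}_1$ is zero, your rule turns $\pmb{L}_1$ into the all-zero column, leaving $\rx_1$ with $c'_1=0$, $c'_2=x_1+x_3$, and side information $x_2$; as $x_3\notin K_1$, $\rx_1$ can no longer recover $x_1$. Yet the theorem does hold: $\pmb{L}'_1=(1,0,0,0)^\tr$ works. The correct argument must choose $\pmb{L}'_k$, $k\in\Sc_i$, \emph{afresh} inside the coordinate subspace indexed by $\Db_i$ so that, together with the retained columns and $\Kb_i$, they span all of the $\Db_i$-coordinates; the key enabling step, which your approach lacks entirely, is the dimension count in the paper's proof showing that $|\Sc_i|$ is at least the dimension of the part of the $\Db_i$-subspace not already covered by the multiply-queried columns and the side information.
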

\begin{proof}
See Appendix~\ref{app:proof:thm:zeroes}.
\end{proof}

The new index code guaranteed by Theorem~\ref{thm:zeroes} employs the same code length and the same set of queries as the given index code. Hence, the broadcast rate $\beta$, overall locality $r$ and the average locality $\ravg$ of the new code are identical to those of the given index code. Additionally, the new code guarantees that for any $i \in [N]$ any codeword symbol $c_k$, $k \in \Sc_i$, queried only by $\rx_i$, can be expressed as a linear combination of the demands of $\rx_i$. 
Since any valid encoder matrix can be modified to satisfy this property using Theorem~\ref{thm:zeroes}, in the sequel, without loss of generality, we will only consider encoder matrices $\pmb{L}$ that satisfy
\begin{equation} \label{eq:zeroes}
 \pmb{L}_k \lhd \Db_i \text{ for all } k \in \Sc_i \text{ and } i \in [N].
\end{equation} 

Further, without loss of generality, we will assume that for each $i \in [N]$, the vectors $\pmb{L}_k$, $k \in R_i$, are linearly independent. If this is not the case, then at least one of the codeword symbols $c_k$ queried by $\rx_i$ is a linear combination of the other queried symbols $\pmb{c}_{R_i \setminus \{k\}}$. Reducing the index set of the queries of $\rx_i$ from $R_i$ to $R_i \setminus \{k\}$ does not affect decodability at $\rx_i$ since $c_k$ can be reconstructed from $\pmb{c}_{R_i \setminus \{k\}}$. 
Note that this reduction in the queries does not increase the value of either the overall locality $r$ or the average locality $\ravg$ of the index code.
This process can be repeated till the columns of $\pmb{L}$ corresponding to the queries of each of the receivers are linearly independent. 
Finally, any codeword symbol that is not queried by any of the receivers can be removed from the transmission since this symbol will not be used for decoding.

Let us denote the index set of codeword symbols that are queried exactly once by
\begin{equation}
\Sc = \Sc_1 \cup \cdots \cup \Sc_N.
\end{equation} 
Note that $\Sc_i \cap \Sc_j = \phi$ for any $i \neq j$. Hence, $|\Sc|=\sum_{i=1}^{N}|\Sc_i|$.
Let 
\begin{equation}
\Mc = \Mc_1 \cup \cdots \cup \Mc_N 
\end{equation} 
denote the index set corresponding to the codeword symbols that have been queried by more than one receiver. Observe that $\Sc \cap \Mc = \phi$ and $\Sc \cup \Mc = [\ell]$.

\begin{lemma} \label{lem:single_queries}
For any valid index code for message length $M$, number of receivers $N$, rate $\beta$ and average locality $\ravg$, 
\begin{equation*}
 |\Sc| \geq M(2\beta - N\ravg),
\end{equation*} 
where $|\Sc|$ is the number of codeword symbols that have been queried exactly once.
\end{lemma}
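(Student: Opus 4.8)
The plan is a straightforward double-counting argument on the incidences between receivers and codeword symbols, using the identity $[\ell]=\Sc\cup\Mc$ (a disjoint union) established just above the statement, together with the normalization that every transmitted symbol is queried by at least one receiver.

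First I would count the quantity $\sum_{i\in[N]}|R_i|$ in two ways. On one hand, $\sum_{i\in[N]}|R_i| = MN\ravg$ directly from the definition~\eqref{eq:ravg} of average locality. On the other hand, $\sum_{i\in[N]}|R_i| = \sum_{k\in[\ell]} d_k$, where $d_k$ denotes the number of receivers that query codeword symbol $c_k$, i.e. $d_k = |\{i \in [N] : k \in R_i\}|$. By the definitions of $\Sc$ and $\Mc$, we have $d_k = 1$ for every $k\in\Sc$ and $d_k\geq 2$ for every $k\in\Mc$; moreover every $k\in[\ell]$ lies in $\Sc\cup\Mc$, so there are no symbols with $d_k=0$. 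Hence
\begin{equation*}
MN\ravg \;=\; \sum_{k\in[\ell]} d_k \;\geq\; |\Sc| + 2|\Mc| \;=\; |\Sc| + 2\bigl(\ell - |\Sc|\bigr) \;=\; 2\ell - |\Sc|,
\end{equation*}
where I used $|\Mc| = \ell - |\Sc|$, again from $[\ell]=\Sc\cup\Mc$ with $\Sc\cap\Mc=\phi$.

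Rearranging the displayed inequality gives $|\Sc|\geq 2\ell - MN\ravg$, and substituting $\ell = M\beta$ yields $|\Sc|\geq M(2\beta - N\ravg)$, as claimed. I do not expect any real obstacle here; the only point that needs a word of justification is that we may assume there are no unqueried codeword symbols, which follows from the reduction discussed after Theorem~\ref{thm:zeroes} (any symbol queried by no receiver can be deleted without affecting rate or locality). If one wanted to avoid invoking that reduction, the same bound holds with $\ell$ replaced by the number of actually-queried symbols, which is at most $\ell$, so the inequality is only made easier.
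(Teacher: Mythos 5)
Your proof is correct and follows essentially the same double-counting argument as the paper: count $\sum_i |R_i| = MN\ravg$ on one side, lower bound the per-symbol query counts by $1$ on $\Sc$ and $2$ on $\Mc$ on the other side, and rearrange using $\ell = M\beta$. Your extra remark that unqueried symbols may be assumed absent matches the reduction the paper invokes just before the lemma, so the two proofs coincide.
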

\begin{proof}
We will count the total number of queries made by all the receivers in two different ways and relate these expressions to arrive at the statement of this lemma. 

The number of queries made by $\rx_i$ is $|R_i|$. Hence, the total number of queries made by all the receivers is $\sum_{i \in [N]}|R_i|$. From~\eqref{eq:ravg} this is equal to $MN\ravg$. The number of times a codeword symbol $c_k$ is queried is equal to $1$ if $k \in \Sc$, and is at least $2$ if $k \in \Mc$. Thus the total number of queries satisfies
\begin{align*}
 MN\ravg &\geq \sum_{k \in \Sc} 1 \, + \, \sum_{k \in \Mc} 2= |\Sc| + 2|\Mc| \\ 
&= |\Sc| + 2(\ell - |\Sc|) = 2\ell - |\Sc|,
\end{align*} 
where we have used the fact $|\Sc| + |\Mc|=\ell$. Substituting $\ell = M\beta$ in the above inequality, we arrive at $MN\ravg \geq 2M\beta - |\Sc|$ thereby proving the lemma.
\end{proof}

\subsection{Scalar linear index codes with local decodability} \label{sec:scalar_structure}

We will now derive a few results that hold for the case $M=1$. Note that, in this case $\Db_i=\{i\}$ and $\Kb_i=K_i$ for all $i \in [N]$. The locality of each receiver $r_i=|R_i|$ is an integer, and so is the overall locality $r=\max_i r_i$. 

A matrix $\pmb{A} \in \Fb_q^{N \times N}$ \emph{fits} $G=(\mathcal{V},\mathcal{E})$ if the diagonal elements of $\pmb{A}$ are all equal to $1$, and the $(j,i)^\tth$ entry of $\pmb{A}$ is zero if $j \notin K_i$, i.e., if $(i,j) \notin \mathcal{E}$. The \emph{minrank} of $G$ over $\Fb_q$ is the minimum among the ranks of all possible matrices $\pmb{A} \in \Fb_q^{N \times N}$ that fit $G$, and is denoted as $\minrank(G)$. It is known that the smallest possible scalar linear index coding rate is equal to $\minrank(G)$~\cite{YBJK_IEEE_IT_11,DSC_IT_12}. We also know from~\cite{YBJK_IEEE_IT_11,DSC_IT_12} that a matrix $\pmb{L}$ is a valid encoder matrix for $G$ if and only if for each receiver $i \in [N]$, there exists a vector $\pmb{u}_i \in \Fb_q^N$ such that $\pmb{u}_i \lhd K_i$ and $\pmb{u}_i + \pmb{e}_i \in \colsp(\pmb{L})$, where $\colsp$ denotes the column span of a matrix. 
If $\pmb{L}$ is a valid encoder matrix, stacking these vectors we obtain the $N \times N$ matrix 
\begin{equation*}
\pmb{A} = \left[ \pmb{u}_1 + \pmb{e}_1~~\pmb{u}_2+\pmb{e}_2~\cdots~\pmb{u}_N+\pmb{e}_N  \right].
\end{equation*} 
Notice that $\pmb{A}$ fits $G$ and $\colsp(\pmb{A}) \subseteq \colsp(\pmb{L})$. We will say that $\pmb{A}$ is a \emph{fitting matrix corresponding to the encoder matrix} $\pmb{L}$.

Suppose $\pmb{L} \in \Fb_q^{N \times \ell}$ is a valid scalar linear encoder and the queries of the $N$ receivers are $R_1,\dots,R_N \subseteq [\ell]$. 
From Theorem~\ref{thm:design_criterion}, for each $i \in [N]$, there exists a vector $\pmb{u}_i \lhd K_i$ such that $\pmb{u}_i + \pmb{e}_i \in \spp(\pmb{L}_k, k \in R_i)$.
Thus, there exist scalars $\alpha_{i,k}$, $k \in R_i$, such that $\pmb{u}_i+\pmb{e}_i = \sum_{k \in R_i} \alpha_{i,k}\pmb{L}_k$. 
The $i^\tth$ receiver decodes its demand by computing $\sum_{k \in R_i}\alpha_{i,k}c_k - \pmb{x}^\tr \pmb{u}_i$, which is equal to
\begin{align*}
\sum_{k \in R_i} \alpha_{i,k} \pmb{x}^\tr\pmb{L}_k - \pmb{x}^\tr\pmb{u}_i = \pmb{x}^\tr(\pmb{u}_i + \pmb{e}_i) - \pmb{x}^\tr\pmb{u}_i = x_i.
\end{align*} 
Notice that the receiver can compute $\pmb{x}^\tr\pmb{u}_i$ using its side information since $\supp(\pmb{u}_i) \subseteq K_i$.
We will assume that each scalar $\alpha_{i,k}$ is non-zero since if $\alpha_{i,k}=0$ the receiver does not need to query the coded symbol $c_k$.
Finally, notice that stacking the vectors $\pmb{u}_i + \pmb{e}_i$, $i \in [N]$, we obtain a fitting matrix $\pmb{A}$ corresponding to $\pmb{L}$.
  
For certain choices of $S \subseteq [N]$, we will now relate the sizes of $R_i$, $i \in S$, and their union $\cup_{i \in S}R_i$. Let $\nullsp(\pmb{A})$ denote the null space of $\pmb{A}$.

\begin{lemma} \label{lem:fitting_support}
Let $\pmb{A}$ be any fitting matrix corresponding to a valid scalar linear encoder $\pmb{L}$, and $S \subseteq [N]$ be such that $S$ is the support of a non-zero vector in $\nullsp(\pmb{A})$ and the vectors $\pmb{L}_k$, $k \in \cup_{i \in S}R_i$, are linearly independent. Then
\begin{equation*}
\sum_{i \in S}|R_i| ~\geq~ 2\,\left| \bigcup_{i \in S} R_i \right|.
\end{equation*} 
\end{lemma}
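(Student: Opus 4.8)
The plan is to exploit the two pieces of structure we have available: (i) that $S$ is the support of a null vector of the fitting matrix $\pmb{A}$, which forces a linear dependency among the columns $\pmb{u}_i+\pmb{e}_i$, $i \in S$, of $\pmb{A}$; and (ii) the normalization~\eqref{eq:zeroes}, which says that every codeword symbol queried by exactly one receiver $\rx_i$ is a scalar multiple of $x_i$ (since $\Db_i=\{i\}$ when $M=1$, so $\pmb{L}_k \lhd \{i\}$, i.e.\ $\pmb{L}_k$ is a multiple of $\pmb{e}_i$). Writing $R = \bigcup_{i \in S} R_i$, I want to show $\sum_{i\in S}|R_i| \ge 2|R|$, which by the counting identity $\sum_{i\in S}|R_i| = \sum_{k \in R} (\text{number of } i \in S \text{ with } k \in R_i)$ is equivalent to showing that \emph{on average} each $k \in R$ lies in at least two of the sets $R_i$, $i \in S$. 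Equivalently, it suffices to show that the number of $k \in R$ that lie in exactly one $R_i$ (with $i \in S$) is at most $\sum_{i \in S}(|R_i| - |R|)$ — but the cleanest route is to bound the count of "singly-covered-within-$S$" indices directly.

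First I would set up the dependency. Let $\pmb{v} \in \nullsp(\pmb{A})$ be the nonzero vector with $\supp(\pmb{v}) = S$, with entries $v_i \ne 0$ for $i \in S$ and $v_i = 0$ otherwise. Then $\pmb{0} = \pmb{A}\pmb{v} = \sum_{i \in S} v_i(\pmb{u}_i + \pmb{e}_i)$, so $\sum_{i \in S} v_i \pmb{e}_i = -\sum_{i \in S} v_i \pmb{u}_i$. Now recall $\pmb{u}_i + \pmb{e}_i = \sum_{k \in R_i}\alpha_{i,k}\pmb{L}_k$ with all $\alpha_{i,k}\ne 0$; summing with weights $v_i$ gives
\begin{equation*}
\sum_{i \in S} v_i \pmb{e}_i \;=\; \sum_{i \in S} v_i \sum_{k \in R_i} \alpha_{i,k} \pmb{L}_k \;-\; \sum_{i \in S} v_i \pmb{u}_i.
\end{equation*}
The key observation is that the left-hand side lies in $\spp(\pmb{e}_i : i \in S)$. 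I would now project this relation onto the "singly covered" coordinates. For each $i \in S$, call an index $k \in \Sc_i$ \emph{lonely for $i$}; by~\eqref{eq:zeroes}, $\pmb{L}_k = \gamma_k \pmb{e}_i$ for some $\gamma_k \ne 0$ when $k$ is lonely for $i$. The plan is to argue that a lonely index $k$ for $i\in S$ cannot have its contribution $v_i\alpha_{i,k}\pmb{L}_k = v_i\alpha_{i,k}\gamma_k\pmb{e}_i$ cancelled: the only other terms on the right-hand side that can produce an $\pmb{e}_i$-component are the $\pmb{u}_{i'}$, which are supported on $K_{i'}$ (and cannot hit $\pmb{e}_i$ in a way that matches the needed coefficient unless $i \in K_{i'}$), and other $\pmb{L}_{k'}$'s — but the $\pmb{L}_k$, $k \in R = \cup_{i\in S}R_i$, are linearly independent by hypothesis, so no $\pmb{L}_k$ with $k$ lonely can be written in terms of the others. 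Pushing this through, I expect to conclude that each $R_i$ with $i \in S$ contains \emph{no} index that is lonely in the sense of being in $\Sc_i$ — i.e.\ every $k \in R_i$ ($i \in S$) is queried by at least one receiver in $S \setminus \{i\}$, which is precisely the doubly-covered condition that yields $\sum_{i\in S}|R_i| \ge 2|R|$ by the counting argument above.

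The main obstacle is making the cancellation argument rigorous: I must rule out that a term $v_i\alpha_{i,k}\pmb{L}_k$ with $k$ lonely for $i$ is cancelled by the $-\sum_{i'\in S} v_{i'}\pmb{u}_{i'}$ part, since those vectors are \emph{not} among the linearly independent $\pmb{L}_k$'s. The right way to handle this is to move everything to one side: $\sum_{i\in S} v_i\sum_{k\in R_i}\alpha_{i,k}\pmb{L}_k = \sum_{i\in S}v_i(\pmb{u}_i + \pmb{e}_i)$, and then express the right-hand side again as a combination of the $\pmb{L}_k$'s using each receiver's own decoding relation — but that just re-derives a tautology. Instead I would argue by looking at a fixed lonely $k_0 \in \Sc_i$, $i \in S$: suppose for contradiction that $k_0 \in R_i$; then in the identity $\sum_{i\in S}v_i(\pmb{u}_i+\pmb{e}_i) = \pmb{0}$, substitute the \emph{unique} (up to the independent basis) expansion. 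Since $\{\pmb{L}_k : k \in R\}$ is linearly independent and each $\pmb{u}_i + \pmb{e}_i = \sum_{k\in R_i}\alpha_{i,k}\pmb{L}_k$ is the unique such expansion, the relation $\sum_{i\in S}v_i(\pmb{u}_i+\pmb{e}_i)=\pmb{0}$ reads $\sum_{k \in R}\big(\sum_{i \in S: k \in R_i} v_i\alpha_{i,k}\big)\pmb{L}_k = \pmb{0}$, forcing $\sum_{i\in S: k\in R_i} v_i\alpha_{i,k} = 0$ for every $k \in R$. If $k_0$ is lonely for $i$ the sum has a single term $v_i\alpha_{i,k_0} = 0$, contradicting $v_i \ne 0$, $\alpha_{i,k_0}\ne 0$. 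Hence no $k \in R_i$ ($i \in S$) is lonely, so $k$ belongs to $R_{i'}$ for some $i' \in S$, $i' \ne i$; therefore each $k \in R$ is counted at least twice in $\sum_{i \in S}|R_i|$, giving $\sum_{i\in S}|R_i| \ge 2|R|$ and completing the proof.
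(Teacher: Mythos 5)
Your closing argument --- expanding $\pmb{0} = \sum_{i\in S}v_i(\pmb{u}_i+\pmb{e}_i)$ in the linearly independent set $\{\pmb{L}_k : k \in R\}$, reading off $\sum_{i\in S:\, k\in R_i}v_i\alpha_{i,k}=0$ for every $k\in R$, and observing that each such sum must contain at least two of the nonzero terms $v_i\alpha_{i,k}$ --- is precisely the paper's proof, and it is correct. The earlier detours through Theorem~\ref{thm:zeroes} and projections onto singly-covered coordinates are unnecessary (the paper does not invoke them), and the word ``lonely'' should consistently mean ``covered by exactly one $R_i$ with $i\in S$'' rather than ``$k\in\Sc_i$'' (the latter is a strictly stronger condition, so ``not lonely'' in the $\Sc_i$ sense does not by itself imply coverage by a second $R_{i'}$ with $i'\in S$); but since the coefficient identity you derive directly forces the stronger conclusion, the final argument goes through.
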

\begin{proof}
Denote the columns of $\pmb{A}$ by $\pmb{A}_1,\dots,\pmb{A}_N$. Let $\pmb{z} \in \nullsp(\pmb{A}) \setminus \{\pmb{0}\}$ be such that $S = \supp(\pmb{z})$. Since $\pmb{Az}=\pmb{0}$, we have $\sum_{i \in S}z_i\pmb{A}_i=\pmb{0}$, where the components $z_i$, $i \in S$, of the vector $\pmb{z}$ are non-zero.
Notice that there exist non-zero scalars $\alpha_{i,k}$ such that $\pmb{A}_i = \sum_{k \in R_i} \alpha_{i,k} \pmb{L}_k$. Hence, we have
\begin{align*}
\pmb{0} = \sum_{i \in S}z_i\pmb{A}_i = \sum_{i \in S}\sum_{k \in R_i} z_i\alpha_{i,k} \pmb{L}_k.
\end{align*} 
All the scalars $z_i\alpha_{i,k}$ in the above linear combination are non-zero, and the set of vectors $\pmb{L}_k$, $k \in \cup_{i \in S}R_i$, appearing in this linear combination are linearly independent. Hence, this linear combination is zero only if each $\pmb{L}_k$, where $k \in \cup_{i \in S}R_i$, appears at least twice in the expansion $\sum_{i \in S}\sum_{k \in R_i} z_i\alpha_{i,k} \pmb{L}_k$, i.e., only if each $k \in \cup_{i \in S}R_i$ is contained in at least two distinct sets $R_i$ and $R_j$, $i \neq j$ and $i,j \in S$. Then a simple counting argument leads to the statement of this lemma.
\end{proof}

The following result can be used to manipulate the bound in Lemma~\ref{lem:fitting_support} to derive explicit lower bounds on locality.

For any $S \subseteq [N]$, let $G_S$ denote the subgraph of $G$ induced by the vertices in $S$, i.e., the vertex set of $G$ is $S$ and edge set is $\{(i,j) \in \mathcal{E}\vert i,j \in S\}$. The subgraph $G_S$ is the side information graph of the index coding problem obtained by restricting the index coding problem $G$ to the messages $x_i$, $i \in S$.

\begin{lemma} \label{lem:lower_bound_on_union}
Let $\pmb{L}$ be a valid scalar linear index code for $G$ with receiver queries $R_1,\dots,R_N$. For any $S \subseteq [N]$, we have
\begin{equation*}
 \left| \bigcup_{i \in S} R_i \right| \geq \minrank(G_S).
\end{equation*} 
\end{lemma}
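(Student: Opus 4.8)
The plan is to show that the columns of $\pmb{L}$ indexed by $\bigcup_{i \in S} R_i$ already suffice to decode the sub-problem $G_S$, so that $\bigl|\bigcup_{i \in S} R_i\bigr|$ is at least the optimal scalar linear rate of $G_S$, namely $\minrank(G_S)$. I would carry this out through the fitting-matrix characterization rather than by reasoning about decoders directly.

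First I would take the fitting matrix $\pmb{A} \in \Fb_q^{N \times N}$ corresponding to $\pmb{L}$ that was constructed just before the lemma, so that its $i^\tth$ column is $\pmb{A}_i = \pmb{u}_i + \pmb{e}_i$ with $\pmb{u}_i \lhd K_i$ and, crucially, $\pmb{A}_i \in \spp(\pmb{L}_k,\, k \in R_i)$. Let $\pmb{B}$ be the $|S| \times |S|$ submatrix of $\pmb{A}$ obtained by keeping only the rows and columns whose indices lie in $S$. I would then check that $\pmb{B}$ fits $G_S$: its diagonal entries are diagonal entries of $\pmb{A}$, hence all equal to $1$; and for distinct $i,j \in S$ the $(j,i)^\tth$ entry of $\pmb{B}$ equals that of $\pmb{A}$, which vanishes whenever $j \notin K_i$, i.e.\ whenever $(i,j) \notin \mathcal{E}$ — and $(i,j)$ fails to be an edge of $G_S$ precisely in that case. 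Since $\minrank(G_S)$ is by definition the minimum rank over all matrices fitting $G_S$, this gives $\rank(\pmb{B}) \geq \minrank(G_S)$.

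Next I would bound $\rank(\pmb{B})$ from above. Deleting rows cannot increase the rank, so $\rank(\pmb{B}) \leq \rank(\pmb{A}^{(S)})$, where $\pmb{A}^{(S)}$ denotes the $N \times |S|$ matrix with columns $\pmb{A}_i$, $i \in S$. Each such column satisfies $\pmb{A}_i \in \spp(\pmb{L}_k,\, k \in R_i) \subseteq \spp\bigl(\pmb{L}_k,\, k \in \bigcup_{i \in S} R_i\bigr)$, so every column of $\pmb{A}^{(S)}$ lies in a subspace of dimension at most $\bigl|\bigcup_{i \in S} R_i\bigr|$, whence $\rank(\pmb{A}^{(S)}) \leq \bigl|\bigcup_{i \in S} R_i\bigr|$. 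Chaining the inequalities yields $\minrank(G_S) \leq \rank(\pmb{B}) \leq \rank(\pmb{A}^{(S)}) \leq \bigl|\bigcup_{i \in S} R_i\bigr|$, which is the claim. The only point needing care is the verification that $\pmb{B}$ genuinely fits $G_S$ — that taking the principal submatrix indexed by $S$ on the algebraic side corresponds exactly to passing to the vertex-induced subgraph on the combinatorial side — together with the observation that one must use the specific fitting matrix whose $i^\tth$ column is supported on $\rx_i$'s queries, not an arbitrary one; everything else is elementary rank bookkeeping.
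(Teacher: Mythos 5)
Your proof is correct, and it takes a somewhat different route from the paper's. The paper argues operationally: it observes that the submatrix of $\pmb{L}$ obtained by keeping the rows indexed by $S$ and the columns indexed by $\bigcup_{i\in S}R_i$ is itself a valid scalar linear encoder for the restricted problem $G_S$ (implicitly, a receiver $i\in S$ can treat any side-information symbol $x_j$ with $j\in K_i\setminus S$ as zero), and then invokes the known fact that any valid scalar linear codelength for $G_S$ is at least $\minrank(G_S)$. You instead work entirely on the fitting-matrix side: you pass from $\pmb{A}$ to the principal submatrix $\pmb{B}$ indexed by $S$, verify directly that $\pmb{B}$ fits $G_S$ so $\rank(\pmb{B})\geq\minrank(G_S)$, and then bound $\rank(\pmb{B})$ above by $\bigl|\bigcup_{i\in S}R_i\bigr|$ using the containment $\pmb{A}_i\in\spp(\pmb{L}_k,\,k\in R_i)$, which is exactly the property the paper established when constructing $\pmb{A}$ just before the lemma. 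Your version essentially inlines the ``codelength $\geq$ minrank'' argument rather than citing it, which makes the dependence on the specific fitting matrix (the one whose columns are supported on each receiver's queries) explicit; the paper's version is shorter but leans on the restricted-encoder validity step and the cited theorem. Both are sound, and the rank bookkeeping in yours is all correct: deleting the rows outside $S$ cannot increase rank, and each retained column lies in a subspace of dimension at most $\bigl|\bigcup_{i\in S}R_i\bigr|$.
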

\begin{proof}
The submatrix $\pmb{L}_S$ of $\pmb{L}$ consisting of the rows indexed by $S$ is a valid encoder matrix for the index coding problem $G_S$. Since, the receivers \mbox{$i \in S$} query only the coded symbols with indices \mbox{$k \in \cup_{i \in S}R_i$}, the submatrix of $\pmb{L}_S$ consisting of the columns with indices in $\cup_{i \in S}R_i$ is also a valid scalar linear encoder for $G_S$. Hence, the codelength $|\cup_{i \in S}R_i|$ of this index code is lower bounded by $\minrank(G_S)$.
\end{proof}

The next result follows immediately from Lemmas~\ref{lem:fitting_support} and~\ref{lem:lower_bound_on_union}.

\begin{corollary} \label{cor:lower_bound_r}
If $\pmb{L}$ is an optimal scalar linear encoder for $G$, i.e., has codelength equal to $\minrank(G)$, $\pmb{A}$ is a fitting matrix corresponding to $\pmb{L}$ and $\pmb{z} \in \nullsp(\pmb{A}) \setminus \{\pmb{0}\}$, then
\begin{equation*}
\sum_{i \in S} r_i ~\geq~ 2 \,\minrank(G_S),
\end{equation*} 
where $S=\supp(\pmb{z})$.
\end{corollary}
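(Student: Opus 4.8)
The plan is to obtain the bound by simply chaining Lemmas~\ref{lem:fitting_support} and~\ref{lem:lower_bound_on_union}, once the linear-independence hypothesis of the former has been checked. Since we are in the scalar case $M=1$, we have $r_i=|R_i|$, so the claimed inequality is literally $\sum_{i\in S}|R_i|\ge 2\minrank(G_S)$ for $S=\supp(\pmb z)$.

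The one point that needs an argument is that the columns $\pmb L_1,\dots,\pmb L_\ell$ of an \emph{optimal} encoder are linearly independent. I would argue this via the rank of the fitting matrix: since $\pmb A$ fits $G$ we have $\rank(\pmb A)\ge\minrank(G)$; since $\colsp(\pmb A)\subseteq\colsp(\pmb L)$ we have $\rank(\pmb A)\le\rank(\pmb L)$; and $\rank(\pmb L)\le\ell$ because $\pmb L$ has only $\ell$ columns. Optimality means $\ell=\minrank(G)$, so the chain $\minrank(G)\le\rank(\pmb A)\le\rank(\pmb L)\le\ell=\minrank(G)$ collapses to equalities; in particular $\rank(\pmb L)=\ell$, i.e.\ all $\ell$ columns of $\pmb L$ are linearly independent. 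A fortiori, for any $S\subseteq[N]$ the subcollection $\{\pmb L_k:k\in\bigcup_{i\in S}R_i\}$ is linearly independent.

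With this in hand the hypotheses of Lemma~\ref{lem:fitting_support} hold for $S=\supp(\pmb z)$ with $\pmb z\in\nullsp(\pmb A)\setminus\{\pmb 0\}$: $S$ is the support of a non-zero vector of $\nullsp(\pmb A)$, and the relevant columns of $\pmb L$ are independent. Lemma~\ref{lem:fitting_support} then yields $\sum_{i\in S}|R_i|\ge 2\bigl|\bigcup_{i\in S}R_i\bigr|$, and Lemma~\ref{lem:lower_bound_on_union} yields $\bigl|\bigcup_{i\in S}R_i\bigr|\ge\minrank(G_S)$. Combining the two inequalities and substituting $r_i=|R_i|$ gives $\sum_{i\in S}r_i\ge 2\minrank(G_S)$. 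The only step requiring any thought is the linear-independence claim for the columns of $\pmb L$; the rest is a mechanical concatenation of the two lemmas. It is also worth remarking that the without-loss-of-generality reductions described after Theorem~\ref{thm:zeroes} (removing unqueried codeword coordinates and dependent queried columns at each receiver) are consistent with optimality, so invoking optimality introduces no conflict with those normalizations.
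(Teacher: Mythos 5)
Your argument is correct and matches the paper's own proof essentially line by line: establish that the columns of an optimal encoder $\pmb L$ are linearly independent via the chain $\ell=\minrank(G)\le\rank(\pmb A)\le\rank(\pmb L)\le\ell$, then apply Lemma~\ref{lem:fitting_support} followed by Lemma~\ref{lem:lower_bound_on_union} and substitute $r_i=|R_i|$. Nothing is missing and no alternative route is taken.
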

\begin{proof}
The matrix $\pmb{L}$ has linearly independent columns since the number of columns $\ell$ of $\pmb{L}$ satisfies
\begin{equation*}
\ell = \minrank(G) \leq \rank(\pmb{A}) \leq \rank(\pmb{L}) \leq \ell.
\end{equation*} 
The corollary holds since $r_i=|R_i|$ for scalar linear codes and the vectors $\pmb{L}_k$, $k \in \cup_{i \in S}R_i$ satisfy the conditions of Lemma~\ref{lem:fitting_support}.
\end{proof}

\section{Vector Linear Coding for Directed Cycles} \label{sec:directed_cycles}

\begin{figure}[!t]
\centering
\includegraphics[width=2in]{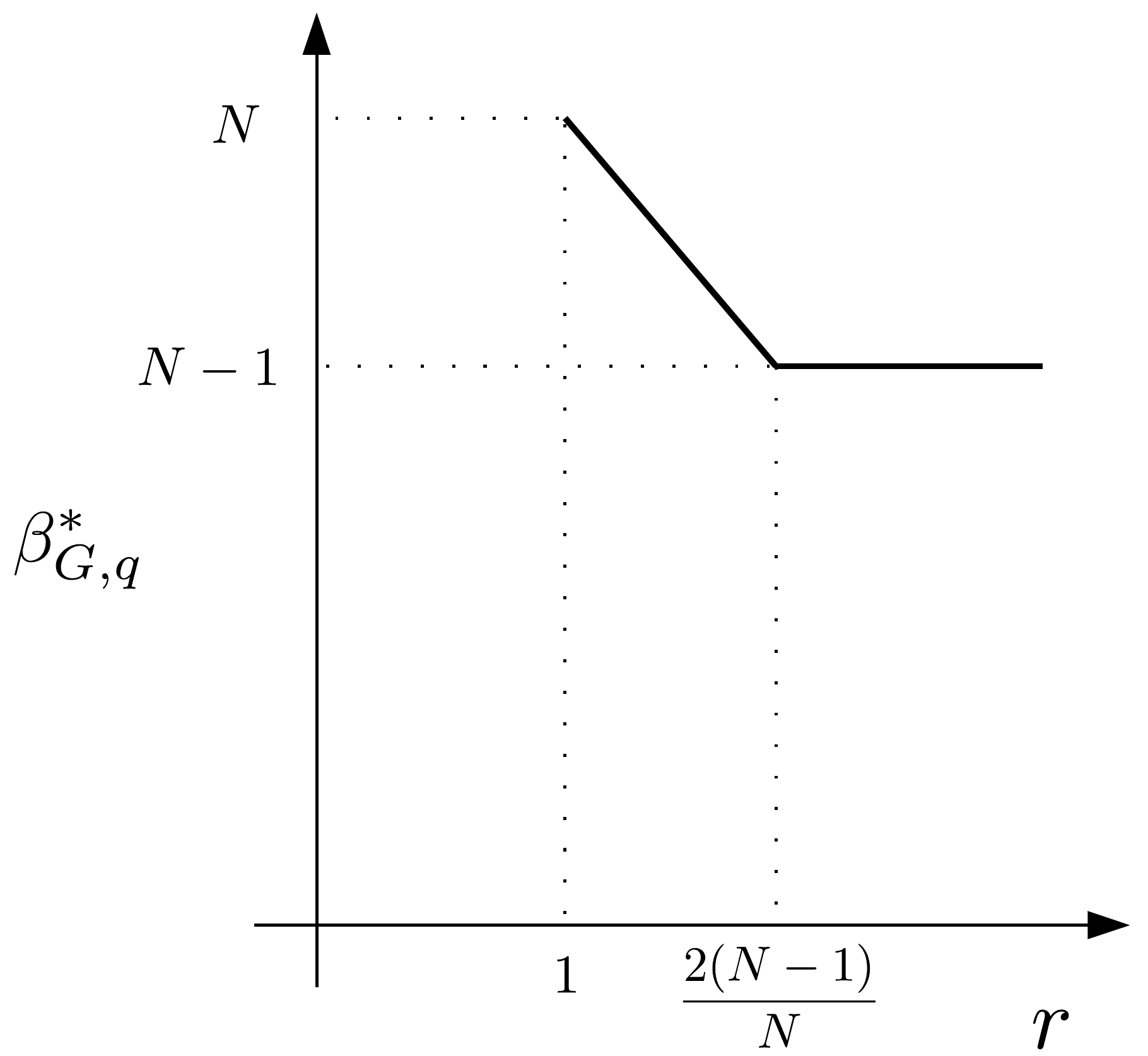}
\caption{The locality-rate trade-off of linear index codes for the directed $N$-cycle index coding problem.}
\label{fig:N_cycle_tradeoff}
\end{figure} 

In this section, we will consider the index coding problem where $G$ is a directed $N$-cycle, i.e., for $i=1,\dots,N-1$, $K_i = \{i+1\}$ and $K_N=\{1\}$. For $N \geq 3$, and over any finite field $\Fb_q$, we will show that
\begin{equation} \label{eq:N_cycle_tradeoff}
\beta^*_{G,q}(r) = \max\left\{N-1, \frac{N(N-1-r)}{N-2}\right\}, ~~~r \geq 1.
\end{equation} 
Note that locality $r \geq 1$ for any valid index coding scheme, and hence, $\beta^*_{G,q}(r)$ is defined for $r \geq 1$ only.
The trade-off between rate and locality is shown in Fig.~\ref{fig:N_cycle_tradeoff}. 
Sections~\ref{sec:converse_directed_cycle} and~\ref{sec:achievability_directed_cycle} provide the proofs for the converse and achievability, respectively, of this rate-locality trade-off.
The smallest locality at which the rate $N-1$, which is the minimum possible rate, is achievable is $r=2(N-1)/N$. This locality is achievable if the message length $M$ of the vector linear code is chosen carefully.
In Section~\ref{sec:locality_and_message_length}, we provide a detailed analysis of the effect of the message length $M$ on the locality $r$ when the broadcast rate is $N-1$.

\subsection{Converse} \label{sec:converse_directed_cycle}

In this subsection we will show that $\beta^*_{G,q}(r)$ is lower bounded by both $N-1$ and $N(N-1-r)/(N-2)$. It is clear that $\beta^*_{G,q}(r) \geq N-1$, since even without any locality constraints the smallest possible broadcast rate for the directed $N$-cycle is $N-1$. To complete the converse, we only need to show that $\beta^*_{G,q}(r) \geq N(N-1-r)/(N-2)$.

Suppose that the encoding matrix $\pmb{L}$ is valid with respect to a set of queries $R_1,\dots,R_N$. From Lemma~\ref{lem:single_queries}, the number of codeword symbols queried exactly once
\begin{equation*}
 |\Sc| = \sum_{i \in [N]}|\Sc_i| \geq M(2\beta-N\ravg).
\end{equation*} 
Hence there exists an $i \in [N]$ such that $|\Sc_i| \geq M(2\beta -N\ravg)/N$. Since $G$ is a directed cycle, without loss of generality, let us assume that 
\begin{equation} \label{eq:cycle_converse:2}
|\Sc_N| \geq M(2\beta - N\ravg)/N. 
\end{equation} 
We now relate this lower bound on $|\Sc_N|$ to the rank of $\pmb{L}$ to complete the converse.

For $i=1,\dots,N-1$, we have $\Kb_i=\{iM+1,iM+2,\dots,(i+1)M\}$ and $\Db_i = \{(i-1)M+1,\dots,iM\}$. From Theorem~\ref{thm:design_criterion}, for each $j \in \Db_i$ there exists a $\pmb{u}_j \lhd \Kb_i$ such that $\pmb{e}_j + \pmb{u}_j \in \colsp(\pmb{L})$, where $\colsp(\pmb{L})$ denotes the column span of the matrix $\pmb{L}$. Considering the first $N-1$ receivers $i=1,\dots,N-1$ and each of their demands $j \in \Db_i$, we obtain $M(N-1)$ such vectors $\pmb{e}_j + \pmb{u}_j$, all which lie in $\colsp(\pmb{L})$. Arranging these vectors into a matrix of size $MN \times M(N-1)$ we arrive at
\begin{equation} \label{eq:cycle_converse:1}
          \begin{bmatrix} 
          \pmb{I} & \pmb{0} & \cdots & \pmb{0} & \pmb{0} \\
          \pmb{C}_1 & \pmb{I} & \cdots & \pmb{0} & \pmb{0} \\
          \pmb{0} & \pmb{C}_2 & \cdots & \pmb{0} & \pmb{0} \\
          \vdots & \vdots &            & \vdots & \vdots  \\
          \pmb{0} & \pmb{0} & \cdots & \pmb{I} & \pmb{0} \\
          \pmb{0} & \pmb{0} & \cdots & \pmb{C}_{N-2} & \pmb{I} \\
          \pmb{0} & \pmb{0} & \cdots & \pmb{0} & \pmb{C}_{N-1}
          \end{bmatrix},
\end{equation} 
where each of the submatrices is of size $M \times M$. Note that the columns of this matrix are linearly independent.

Now considering $\rx_N$, we note that $\Db_N=\{(N-1)M+1,\dots,MN\}$. The coded symbols $\pmb{x}^\tr\pmb{L}_k$, $k \in \Sc_N$, are queried only by $\rx_N$. From~\eqref{eq:zeroes}, we deduce that $\pmb{L}_k \lhd \Db_N$ for all $k \in \Sc_N$, i.e., $\supp(\pmb{L}_k) \subseteq \Db_N$. Since the set of vectors $\{\pmb{L}_k \, | \, k \in R_N\}$ is linearly independent and $\Sc_N \subseteq R_N$, we observe that the vectors $\pmb{L}_k$, $k \in \Sc_N$, are linearly independent as well. Note that each of these vectors is a column of $\pmb{L}$ and hence lies in $\colsp(\pmb{L})$. Appending these $|\Sc_N|$ vectors as columns to the matrix in~\eqref{eq:cycle_converse:1}, we arrive at the block matrix
\begin{equation*}
 \pmb{A} = \begin{bmatrix} 
          \pmb{I} & \pmb{0} & \cdots & \pmb{0} & \pmb{0}  & \pmb{0} \\
          \pmb{C}_1 & \pmb{I} & \cdots & \pmb{0} & \pmb{0} & \pmb{0}  \\
          \pmb{0} & \pmb{C}_2 & \cdots & \pmb{0} & \pmb{0} & \pmb{0}  \\
          \vdots & \vdots &            & \vdots & \vdots & \pmb{0}   \\
          \pmb{0} & \pmb{0} & \cdots & \pmb{I} & \pmb{0} & \pmb{0}  \\
          \pmb{0} & \pmb{0} & \cdots & \pmb{C}_{N-2} & \pmb{I} & \pmb{0}  \\
          \pmb{0} & \pmb{0} & \cdots & \pmb{0} & \pmb{C}_{N-1} & \pmb{B}_N
          \end{bmatrix},
\end{equation*} 
where $\pmb{B}_N$ is an $M \times |\Sc_N|$ matrix with linearly independent columns. 
Note that each column of $\pmb{A}$ lies in $\colsp(\pmb{L})$, i.e., $\colsp(\pmb{A}) \subseteq \colsp(\pmb{L})$, and the columns of $\pmb{A}$ are linearly independent, i.e., $\rank(\pmb{A}) = M(N-1) + |\Sc_N|$.
Thus, we have
\begin{align*}
\ell &\geq \rank(\pmb{L}) \geq \rank(\pmb{A}) ~~~~~~~~~~~ (\text{since } \colsp(\pmb{L}) \supseteq \colsp(\pmb{A})) \\
&= M(N-1) + |\Sc_N| \\
&\geq M(N-1) + M(2\beta - N\ravg)/N ~~~~~~~(\text{using~\eqref{eq:cycle_converse:2}})
\end{align*} 
Using the fact that the broadcast rate \mbox{$\beta=\ell/M$}, the above inequality yields \mbox{$\beta \geq (N-1) + (2\beta - N\ravg)/N$}, which upon manipulation results in
\begin{equation} \label{eq:ravg_beta_bound_N_cycle}
\beta \geq {N(N-1-\ravg)}/{(N-2)}.
\end{equation} 
Since $\ravg \leq r$, we arrive at
$\beta \geq \frac{N(N-1-r)}{N-2}$.

\subsection{Achievability} \label{sec:achievability_directed_cycle}

In this subsection we show that the trade-off in~\eqref{eq:N_cycle_tradeoff} is achievable using linear index codes. We will show that the points $(r,\beta)=(1,N)$ and $(2(N-1)/N,N-1)$ are achievable. Then any point on the line segment $\beta = N(N-1-r)/(N-2)$, $1 \leq r \leq 2(N-1)/N$ can be achieved using time sharing between these two schemes.
The achievability of the points $\beta=N-1$ and $r>2(N-1)/N$ will follow immediately since the rate $N-1$ is already achievable with $r=2(N-1)/N$.

\subsubsection{Achieving $r=1$, $\beta=N$}

The point $(r,\beta)=(1,N)$ can be achieved trivially using the uncoded scheme, i.e., the transmitted codeword equals the message vector $\pmb{c}=\pmb{x} \in \Fb_q^{MN}$. Each receiver $\rx_i$ queries $\pmb{c}_{\Db_i}=\pmb{x}_{\Db_i}$ to meet its demand. The side information available at the receivers is not utilized by this scheme. Since the code length $\ell = MN$, we have $\beta=N$ and since $|R_i|=|\Db_i|=M$, we have $r=1$. Also note that this is a linear index code corresponding to the encoding matrix $\pmb{L}=\pmb{I}$.

\subsubsection{Achieving $r=2(N-1)/N$, $\beta=N-1$}

Example~\ref{ex:simple_code_cycle} provides a family of $N$ scalar linear codes for $G$, one for each choice of $i\in [N]$, with rate $\beta=N-1$. The $i^\tth$ code provides localities $r_i=r_{i+1}=1$ and $r_j=2$ for all $j \neq i,i+1$, where we interpret $i+1$ as $1$ if $i=N$. Using $\pmb{r}=(r_1,r_2,\dots,r_N)$ to represent the tuple of receiver localities, we observe that rate $N-1$ can be achieved with the following values of $\pmb{r}$
\begin{align}
\pmb{r}_1&=(1,1,2,2,\dots,2),~ \pmb{r}_2=(2,1,1,2,\dots,2),\dots, \nonumber \\
 \pmb{r}_{N-1}&=(2,2,\dots,2,1,1),~ \pmb{r}_N=(1,2,\dots,2,1). \label{eq:r_tuples}
\end{align}  

If $N$ is an odd integer, we time share the $N$ scalar linear codes corresponding to $\pmb{r}_1,\pmb{r}_2,\dots,\pmb{r}_N$. Observe that the overall scheme is a vector linear code for message length $M=N$, rate $N-1$ and locality $r=\ravg=2(N-1)/N$.

If $N$ is an even integer, we time share $N/2$ scalar linear codes corresponding to $\pmb{r}_1,\pmb{r}_3,\dots,\pmb{r}_{N-1}$, that yields a vector linear code with $M=N/2$, rate $N-1$ and $r=\ravg=2(N-1)/N$.

\subsection{Dependence of locality on message length} \label{sec:locality_and_message_length}

From the achievability scheme in Section~\ref{sec:achievability_directed_cycle} we observed that for $\beta=N-1$, the locality $r=\ravg=2(N-1)/N$ can be achieved using message length $M=N$ if $N$ is odd, and $M=N/2$ if $N$ is even.

\begin{lemma} \label{lem:minimum_M_for_locality}
Let $N \geq 3$. The message length $M$ of any index code that achieves locality $2(N-1)/N$ for the directed $N$-cycle satisfies $M \geq N$ if $N$ is odd, and $M \geq N/2$ if $N$ is even.
\end{lemma}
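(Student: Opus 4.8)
The plan is to derive the bound purely from the integrality of the receivers' query sets together with the arithmetic of the fraction $2(N-1)/N$; none of the structural results of Section~\ref{sec:structure} are needed for this particular statement.

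First I would observe that the overall locality $r=\max_{i\in[N]}r_i$ is a maximum over the finite set $[N]$ and is therefore attained: there is a receiver $\rx_{i_0}$ with $r_{i_0}=r=2(N-1)/N$. By definition $r_{i_0}=|R_{i_0}|/M$, and $|R_{i_0}|$, being the number of codeword symbols queried by $\rx_{i_0}$, is a nonnegative integer. Hence $2M(N-1)/N$ is an integer, i.e.\ $N \mid 2M(N-1)$.

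Next I would use $\gcd(N,N-1)=1$ to cancel the factor $N-1$, which yields $N \mid 2M$, and then separate the two parities. If $N$ is odd then $\gcd(N,2)=1$, so $N \mid 2M$ forces $N \mid M$, whence $M\geq N$. If $N$ is even then $N\mid 2M$ is equivalent to $(N/2)\mid M$, whence $M\geq N/2$. This proves the lemma.

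There is essentially no technical obstacle here; the one point that requires care is the reading of the hypothesis. The assumption must be understood as ``the locality \emph{equals} $2(N-1)/N$'', not merely ``is at most $2(N-1)/N$'': the uncoded scheme already attains $r=1<2(N-1)/N$ with $M=1$, so the conclusion would fail under the weaker hypothesis. (If one additionally assumes the code operates at the optimal rate $\beta=N-1$, then combining Lemma~\ref{lem:single_queries} with the rank computation of Section~\ref{sec:converse_directed_cycle} forces $\ravg=r$ and hence $r_i=2(N-1)/N$ for \emph{every} $i$; but this stronger statement is not required for the bound on $M$.)
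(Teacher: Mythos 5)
Your proof is correct and follows essentially the same route as the paper's: pick a receiver $i_0$ attaining the maximum locality, use the integrality of $|R_{i_0}|$ together with $\gcd(N,N-1)=1$ (and the parity of $N$) to conclude $N\mid M$ or $(N/2)\mid M$. The paper phrases it as $M=|R_{i_0}|N/2(N-1)$ being an integer while you phrase it as $N\mid 2M(N-1)$; these are the same observation, and your closing remark that the hypothesis must be read as equality (not an upper bound) is a correct reading of the lemma.
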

\begin{proof}
Consider any valid coding scheme with locality $r=2(N-1)/N$. There exists an $i \in [N]$ such that
$\frac{2(N-1)}{N} = r = r_i = \frac{|R_i|}{M}$,
that is 
\begin{equation*}
 M = \frac{|R_i|\,N}{2(N-1)}.
\end{equation*} 
If $N$ is odd, $N$ and $2(N-1)$ have no common factors, and since $M$ is an integer, we deduce that $M$ must be a multiple of $N$, i.e., $M \geq N$. If $N$ is even, using the fact $N/2$ and $N-1$ have no common factors we arrive at $M \geq N/2$.
\end{proof}

From Lemma~\ref{lem:minimum_M_for_locality}, it is clear that the minimum $M$ required to attain $r=2(N-1)/N$ at rate $N-1$ is $M=N$ if $N$ is odd and $M=N/2$ if $N$ is even. We will now derive the optimal locality when the message length is smaller than this quantity, i.e., $M<N$. We do so by analysing the two cases, $M < N/2$ and $N/2 \leq M < N$.

\subsubsection{Locality when $M < N/2$}

From~\eqref{eq:ravg_beta_bound_N_cycle} we deduce that for any vector linear scheme of rate $\beta=N-1$, we have 
\begin{equation*}
 \ravg \geq 2(N-1)/N.
\end{equation*} 
Thus, $\sum_{i=1}^{N}|R_i| = MN\ravg \geq 2M(N-1)$. It follows that there exists an $i \in [N]$ such that 
\begin{equation} \label{eq:size_of_Ri_M_and_N}
|R_i| \geq \frac{2M(N-1)}{N} = 2M - \frac{M}{N/2}.
\end{equation} 
If $M<N/2$, considering the fact that $|R_i|$ is an integer, we deduce that $|R_i| \geq 2M$. Hence, $r_i=|R_i|/M \geq 2$, and thus, $r \geq 2$. This lower bound on $r$ can be achieved by simply using the scalar linear code of Example~\ref{ex:simple_code_cycle} $M$ times, leading to a vector linear code for message length $M$, rate $N-1$ and $r=2$. Note that this code still achieves the optimal value of average locality $\ravg=2(N-1)/N$.

\subsubsection{Locality when $N/2 \leq M < N$}

If $N$ is even, the message length $M=N/2$ is sufficient to attain $r=2(N-1)/N$. Thus it is enough to consider larger values of $M$, i.e., $N/2 \leq M < N$ only for $N$ odd.
From~\eqref{eq:size_of_Ri_M_and_N} and using the fact that $|R_i|$ is an integer, we arrive at $|R_i| \geq 2M-1$. Thus, 
\begin{equation*}
r \geq r_i \geq 2 - \frac{1}{M}.
\end{equation*} 
Assuming $N$ is odd, this lower bound on $r$ is achieved by time sharing the $M$ scalar linear codes from Section~\ref{sec:achievability_directed_cycle} corresponding to the tuples of localities 
\begin{equation*}
\pmb{r}_1,\pmb{r}_3,\dots,\pmb{r}_{N-3},\pmb{r}_N,\pmb{r}_2,\pmb{r}_4,\dots,\pmb{r}_{2M-(N+1)},
\end{equation*} 
see~\eqref{eq:r_tuples}. It is straightforward to show that this scheme has rate $N-1$, $r=2\,-\,1/M$ and $\ravg=2(N-1)/N$. 
Note that in the interval $N/2 \leq M < N$, the value of the optimal locality increases with $M$. Hence, the choice $M=(N+1)/2$ yields the smallest locality in this interval.


\section{Scalar Linear Coding when Minrank is $N-1$} \label{sec:minrank_N_minus_1}

We now characterize the optimal localities $r$ and $\ravg$ among scalar linear index codes for index coding problems $G$ with $\minrank(G)=N-1$. 
The message length $M=1$ for scalar codes, and hence, the receiver localities $r_i$ and the rate $\beta$ are integers. 
Since the minimum scalar coding rate is equal to $\minrank$, we are interested in the operating points corresponding to $\beta=N$ and $\beta=N-1$. In the rest of this section we will assume that $\minrank(G)=N-1$.

The side information graph $G$ contains at least one directed cycle. Otherwise, $G$ is a directed acyclic graph and its minrank is equal to $N$~\cite{YBJK_IEEE_IT_11}, a contradiction. Let $N_c$ denote the length of the smallest directed cycle contained in $G$. 

If $N_c=2$, there exist $i,j \in [N]$ such that $(i,j),(j,i) \in \mathcal{E}$, i.e., $i \in K_j$ and $j \in K_i$. The following scalar linear code attains the minimum possible locality $r=\ravg=1$ and the minimum possible rate $\beta=N-1$ simultaneously. 
Transmit $x_i+x_j$ followed by transmitting the remaining $N-2$ information symbols uncoded. $\rx_i$ and $\rx_j$ can decode using $x_i+x_j$, and the remaining receivers query their demands directly from the codeword.

In the rest of this section we will assume that $N_c \geq 3$. Observe that rate $\beta=N$ can be achieved with smallest possible localities $r=\ravg=1$ using uncoded transmission. 

We will now consider the case $\beta=\minrank(G)=N-1$.
Let $\pmb{L}$ be any scalar encoder matrix with codelength $\ell=N-1$, and $\pmb{A}$ be a corresponding fitting matrix. Since $\ell=\minrank(G)$, we have
$\ell \leq \rank(\pmb{A}) \leq \rank(\pmb{L}) \leq \ell$,
and hence, $\rank(\pmb{A})=\rank(\pmb{L})=N-1=\ell$. Thus, the nullspace of $\pmb{A}$ contains a non-zero vector.

\begin{lemma} \label{lem:GS_contains_cycle}
If $\pmb{z} \in \nullsp(\pmb{A}) \setminus \{\pmb{0}\}$ and $S=\supp(\pmb{z})$, the subgraph $G_S$ of $G$ induced by the vertices $S$ contains at least one directed cycle.
\end{lemma}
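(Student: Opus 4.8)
The goal is to show that whenever $\pmb{z}$ is a nonzero vector in $\nullsp(\pmb{A})$, with $S = \supp(\pmb{z})$, the induced subgraph $G_S$ contains a directed cycle. The plan is to argue by contradiction: suppose $G_S$ is acyclic. Then $G_S$ admits a topological ordering of its vertices, so we may relabel (or just pick a vertex) so that there is a vertex $i \in S$ that is a \emph{sink} in $G_S$, meaning $i$ has no out-neighbours inside $S$ — equivalently $K_i \cap S = \phi$. The idea is to examine the identity $\pmb{Az} = \pmb{0}$, i.e.\ $\sum_{i \in S} z_i \pmb{A}_i = \pmb{0}$, coordinate by coordinate, focusing on the row indexed by this sink $i$.

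\medskip

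The key computation is to look at the $i^{\text{th}}$ coordinate of $\sum_{k \in S} z_k \pmb{A}_k = \pmb{0}$. Recall $\pmb{A}$ fits $G$, so the $(i,k)^{\text{th}}$ entry of $\pmb{A}$ is $1$ when $k = i$ and is zero whenever $i \notin K_k$, i.e.\ whenever $(k,i) \notin \mathcal{E}$. So the $i^{\text{th}}$ coordinate of the sum is $z_i \cdot 1 + \sum_{k \in S,\, k \neq i} z_k \pmb{A}_{i,k}$, and the only terms $k \neq i$ that can contribute are those with $(k,i) \in \mathcal{E}$, i.e.\ $k$ with $i \in K_k$ — these are the in-neighbours of $i$. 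The plan is instead to choose $i$ to be a \emph{source} in $G_S$ or to choose the coordinate/vertex so that exactly the diagonal term survives. Concretely: since $G_S$ is acyclic, pick $i \in S$ with no incoming edges \emph{from within $S$}, i.e.\ no $k \in S$ with $(k,i) \in \mathcal{E}$ other than possibly loops; then the $i^{\text{th}}$ coordinate of $\pmb{Az}$ reduces to $z_i$, forcing $z_i = 0$, contradicting $i \in S = \supp(\pmb{z})$.

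\medskip

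I need to be careful about which of ``source'' or ``sink'' makes the surviving term the diagonal one. The entry $\pmb{A}_{i,k}$ can be nonzero only if $k = i$ or $i \in K_k$, i.e.\ $(k,i) \in \mathcal{E}$, meaning there is an edge from $k$ to $i$; so in the row-$i$ equation the surviving off-diagonal terms correspond to $k \in S$ that point \emph{into} $i$. Hence I should take $i$ to be a vertex of $G_S$ with in-degree zero in $G_S$ — a source — which exists because a finite acyclic digraph always has a source. Then row $i$ of $\pmb{Az}=\pmb{0}$ reads $z_i = 0$, the desired contradiction. (One should note that self-loops $(i,i) \in \mathcal{E}$ are irrelevant here since the diagonal of $\pmb{A}$ is pinned to $1$ regardless, and a self-loop at $i$ does not create a cycle through another vertex of $S$ — but in fact $G_S$ being acyclic already rules out self-loops, so this is a non-issue.)

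\medskip

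The main obstacle is essentially bookkeeping: correctly translating the ``fits $G$'' sign/zero pattern of $\pmb{A}$ (stated in terms of $(j,i)^{\text{th}}$ entry zero when $j \notin K_i$) into the statement about which column-$k$, row-$i$ entries survive, and then matching that with the graph-theoretic notion (source vs.\ sink) so that the leftover term is precisely the diagonal $1$. Once the orientation is pinned down, the argument is a two-line consequence of ``a finite DAG has a source'' plus reading off one coordinate of $\pmb{Az} = \pmb{0}$. No properties of $\pmb{L}$ or the queries are needed for this lemma — only that $\pmb{A}$ fits $G$ and $\pmb{z} \in \nullsp(\pmb{A}) \setminus \{\pmb{0}\}$.
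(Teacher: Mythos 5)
Your proof is correct, but it takes a different route from the paper. The paper passes through the \emph{minrank} machinery: it notes that the principal submatrix $\pmb{A}'$ of $\pmb{A}$ indexed by $S$ fits $G_S$, that $\pmb{Az}=\pmb{0}$ with $\supp(\pmb{z})=S$ forces the columns of $\pmb{A}'$ to be linearly dependent so $\rank(\pmb{A}') \le |S|-1$, hence $\minrank(G_S) \le |S|-1$, and then invokes the cited fact that a directed acyclic graph on $|S|$ vertices has minrank exactly $|S|$. You instead argue directly: assume $G_S$ is acyclic, pick a source $i \in S$ (in-degree zero within $S$), observe from the fitting condition that $\pmb{A}_{i,k}$ can be nonzero for $k \ne i$ only when $(k,i) \in \mathcal{E}$, so row $i$ of $\pmb{Az}=\pmb{0}$ collapses to $z_i = 0$, contradicting $i \in \supp(\pmb{z})$. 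Your version is more elementary and self-contained --- it effectively inlines the standard proof that fitting matrices of DAGs are full rank (triangularity under a topological order) --- whereas the paper's version is shorter given that it can cite $\minrank(\text{DAG}) = |V|$, and it also hands off the quantity $\minrank(G_S) \le |S|-1$, which dovetails with the complementary bound in Lemma~\ref{lem:lower_bound_minrank_GS}. You correctly resolved the source-vs-sink orientation: since $\pmb{A}$ has the $(j,i)$ entry zero when $j \notin K_i$, the surviving off-diagonal entries in row $i$ come from in-neighbours of $i$, so a source is the right choice.
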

\begin{proof}
Observe that the $|S| \times |S|$ submatrix $\pmb{A}'$ of $\pmb{A}$ composed of the rows and columns of $\pmb{A}$ with indices in $S$ fits $G_S$. Since $\pmb{Az}=\pmb{0}$, the columns of $\pmb{A}$ indexed by $S$ are linearly dependent. This implies that the columns of the submatrix $\pmb{A}'$ are linearly dependent as well, and hence, $\rank(\pmb{A}') \leq |S|-1$. It follows that
$\minrank(G_S) \leq \rank(\pmb{A}') \leq |S|-1$.
Clearly $G_S$ is not a directed acyclic graph since otherwise $\minrank(G_S)=|S|$.
\end{proof}

In order to use Corollary~\ref{cor:lower_bound_r}, we now derive a lower bound on $\minrank(G_S)$.

\begin{lemma} \label{lem:lower_bound_minrank_GS}
If $\minrank(G)=N-1$, then for any $S \subseteq [N]$, $\minrank(G_S) \geq |S|-1$.
\end{lemma}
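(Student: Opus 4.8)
The plan is to prove the contrapositive-style bound by a deletion/restriction argument on minrank. Suppose for contradiction that there is some $S \subseteq [N]$ with $\minrank(G_S) \leq |S| - 2$. I would then argue that this forces $\minrank(G) \leq N - 2$, contradicting the hypothesis $\minrank(G) = N-1$. The key observation is a ``gluing'' property of fitting matrices: if $\pmb{A}'$ is a matrix over $\Fb_q$ that fits $G_S$ with $\rank(\pmb{A}') = \minrank(G_S)$, I can extend it to a matrix $\pmb{A}$ that fits all of $G$ by placing $\pmb{A}'$ in the rows and columns indexed by $S$, putting $1$'s on the remaining diagonal entries (those indexed by $[N] \setminus S$), and filling the rest of the matrix with zeros.

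The main step is then the rank computation for this block-structured extension. After a simultaneous row/column permutation bringing the indices of $S$ to the front, $\pmb{A}$ has the block form $\begin{bmatrix} \pmb{A}' & \pmb{0} \\ \pmb{0} & \pmb{I} \end{bmatrix}$, where the identity block has size $(N - |S|) \times (N - |S|)$. Hence $\rank(\pmb{A}) = \rank(\pmb{A}') + (N - |S|) = \minrank(G_S) + N - |S|$. If $\minrank(G_S) \leq |S| - 2$, this gives $\rank(\pmb{A}) \leq N - 2$. Since $\pmb{A}$ fits $G$, we get $\minrank(G) \leq \rank(\pmb{A}) \leq N-2$, the desired contradiction. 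Therefore no such $S$ exists, i.e., $\minrank(G_S) \geq |S| - 1$ for all $S \subseteq [N]$.

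I should double-check that the extended matrix genuinely fits $G$: its diagonal is all $1$'s (the $S$-block inherits this from $\pmb{A}'$ fitting $G_S$, and we set the rest to $1$), and the off-diagonal zero pattern is only relaxed relative to what is required — a fitting matrix is merely required to have zeros at non-edge positions, and putting extra zeros everywhere outside the $S$-block is always allowed. So validity is immediate and requires no case analysis. The one subtlety worth a sentence is that $G_S$ is defined on exactly the induced edge set $\{(i,j) \in \mathcal{E} \mid i,j \in S\}$, so a matrix fitting $G_S$ has the zero pattern compatible with $G$ restricted to $S$; the cross terms between $S$ and $[N]\setminus S$ in $\pmb{A}$ are set to zero, which trivially satisfies the fitting constraint regardless of whether those edges are present in $G$.

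I do not anticipate a real obstacle here; the only thing to be careful about is the direction of the inequality and making sure the block-diagonal rank identity is applied correctly (rank is additive over a block-diagonal decomposition). The result is essentially the statement that minrank, like many graph rank parameters, satisfies $\minrank(G) \leq \minrank(G_S) + (N - |S|)$ for induced subgraphs, and we are simply using its contrapositive with the extremal value $\minrank(G) = N-1$ plugged in. This lemma then feeds directly into Corollary~\ref{cor:lower_bound_r} to yield $\sum_{i \in S} r_i \geq 2(|S|-1)$ whenever $S = \supp(\pmb{z})$ for $\pmb{z} \in \nullsp(\pmb{A}) \setminus \{\pmb{0}\}$.
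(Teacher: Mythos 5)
Your argument is correct and rests on the same underlying inequality the paper uses, namely $\minrank(G) \leq \minrank(G_S) + (N - |S|)$, but you prove it from the matrix definition of minrank rather than via the coding-theoretic characterization. The paper's proof constructs a \emph{code} for $G$: encode $\{x_i : i \in S\}$ with an optimal scalar linear code for $G_S$ (length $\minrank(G_S)$) and send the remaining $N-|S|$ symbols uncoded, giving a valid scalar linear code for $G$ of length $\minrank(G_S) + N - |S|$; since the minimum scalar linear codelength is $\minrank(G) = N-1$, the bound follows. You instead construct the corresponding \emph{fitting matrix} directly: embed an optimal fitting matrix $\pmb{A}'$ for $G_S$ block-diagonally alongside an $(N-|S|)\times(N-|S|)$ identity block, observe that this fits $G$, and compute the rank additively. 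These are two views of the same construction (the block-diagonal fitting matrix is exactly the fitting matrix of the paper's ``optimal-code-on-$S$-plus-uncoded'' scheme), but your version is slightly more self-contained in that it never invokes the theorem that optimal scalar linear codelength equals minrank, and it makes the rank bookkeeping explicit via the block-diagonal rank identity. Your fitting-check sanity paragraph is also correct: zeroing all entries outside the $S$-block can only satisfy, never violate, the off-diagonal support constraints, so the extension fits $G$ for any $G$.
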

\begin{proof}
Consider the following valid scalar linear code for $G$. Encode the information symbols $x_i$, $i \in S$, using the optimal scalar linear code for $G_S$, and use uncoded transmission for the remaining symbols. The length of this code is lower bounded by $\minrank(G)=N-1$, hence we obtain
$\minrank(G_S) + N - |S| \geq N - 1$.
\end{proof}

We now prove the main result of this section.

\begin{theorem}
If $\minrank(G)=N-1$ and the smallest directed cycle in $G$ is of length $N_c \geq 3$, the optimal locality for scalar linear coding for $G$ with rate $N-1$ is
\begin{equation*}
 r = 2 \text{ and } \ravg = \frac{N+N_c-2}{N}.
\end{equation*} 
\end{theorem}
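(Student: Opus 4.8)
The plan is to establish the two claims -- that the optimal overall locality is $r=2$ and the optimal average locality is $\ravg = (N+N_c-2)/N$ -- by proving matching lower bounds (converse) and exhibiting codes that attain them (achievability). For the converse on $\ravg$, I would start from Lemma~\ref{lem:single_queries} specialized to $M=1$ and $\beta=N-1$, which gives $|\Sc| \geq 2(N-1) - N\ravg$, equivalently $N\ravg \geq 2(N-1) - |\Sc|$. Since $|\Sc| + |\Mc| = \ell = N-1$, this is the same as $N\ravg \geq (N-1) + |\Mc|$. So I need a lower bound on $|\Mc|$, the number of codeword symbols queried by more than one receiver. The idea is to invoke Corollary~\ref{cor:lower_bound_r}: take a nonzero $\pmb{z} \in \nullsp(\pmb{A})$ (which exists since $\rank(\pmb{A})=N-1$), let $S=\supp(\pmb{z})$, and get $\sum_{i\in S} r_i \geq 2\,\minrank(G_S)$. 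By Lemma~\ref{lem:lower_bound_minrank_GS}, $\minrank(G_S) \geq |S|-1$, and by Lemma~\ref{lem:GS_contains_cycle}, $G_S$ contains a directed cycle, so $|S| \geq N_c$; hence $\sum_{i\in S} r_i \geq 2(|S|-1) \geq 2(N_c - 1)$. Combining with $\sum_{i \notin S} r_i \geq N - |S|$ (each $r_i \geq 1$), we get $N\ravg = \sum_{i\in[N]} r_i \geq 2(|S|-1) + (N-|S|) = N + |S| - 2 \geq N + N_c - 2$, which is the desired bound on $\ravg$. For the converse $r \geq 2$: I would argue that rate $N-1$ cannot be achieved with all $r_i = 1$, because that would force $\ell \geq N$ (e.g. via the $\sum r_i \geq 2\,\minrank(G_S)$ bound with $|S| \geq N_c \geq 3$, which already contradicts $r_i=1$ for $i \in S$); hence some $r_i \geq 2$, so $r = 2$ since $r_i$ are integers and $r_i \leq 2$ is achievable.

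For achievability I would exhibit, for any target operating point, an explicit scalar linear code of rate $N-1$. The natural construction: pick a smallest directed cycle $C$ in $G$ of length $N_c$, with vertex set $S = \{i_1, \ldots, i_{N_c}\}$. Apply the directed-cycle code of Example~\ref{ex:simple_code_cycle} to the messages $x_{i_1}, \ldots, x_{i_{N_c}}$, using $N_c - 1$ coded symbols, and transmit the remaining $N - N_c$ messages uncoded. This yields codelength $(N_c - 1) + (N - N_c) = N - 1 = \minrank(G)$, so the rate is optimal. Within the cycle, the receivers behave as in Example~\ref{ex:simple_code_cycle}: two of them have locality $1$ and the other $N_c - 2$ have locality $2$; receivers outside $S$ query their single demanded symbol, which is transmitted uncoded (they need only check their message is among the uncoded ones -- it is), so they have locality $1$. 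This gives $\ravg = \big[(N - N_c) \cdot 1 + 2 \cdot 1 + (N_c - 2)\cdot 2\big]/N = (N + N_c - 2)/N$ and $r = 2$, matching both lower bounds simultaneously. (I should double-check that the receivers $\rx_{i_t}$ inside the cycle can indeed decode using only their side information $x_{i_{t+1}}$ within $C$ -- this is exactly the computation in Example~\ref{ex:simple_code_cycle}, and the cyclic structure of $C$ guarantees the needed side-information edge is present in $G$.)

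The main obstacle I anticipate is the converse for $\ravg$: making sure the bound $|S| \geq N_c$ is used correctly and that Corollary~\ref{cor:lower_bound_r} applies, i.e., that $\pmb{L}$ has linearly independent columns (which holds because $\ell = \minrank(G) = \rank(\pmb{L})$, as noted just before Lemma~\ref{lem:GS_contains_cycle}) and that $S = \supp(\pmb{z})$ is genuinely the support of a nonzero null vector. One subtlety: the argument produces $N\ravg \geq N + |S| - 2$ for the particular $S$ arising from $\pmb{z}$, and we only know $|S| \geq N_c$, not $|S| = N_c$; but that is fine since larger $|S|$ only strengthens the bound. Conversely, the achievability uses $|S| = N_c$ exactly, so the two meet. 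A second point worth care: in the achievability I should confirm the code is valid per Theorem~\ref{thm:design_criterion} -- receivers outside the cycle have their demand transmitted uncoded so decodability is immediate, and receivers inside use the Example~\ref{ex:simple_code_cycle} decoding, whose correctness depends only on the edges of $C$ being present in $G$, which they are by construction. Everything else is routine arithmetic.
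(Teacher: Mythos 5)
Your proposal is correct and follows essentially the same route as the paper: the converse combines Corollary~\ref{cor:lower_bound_r} with Lemmas~\ref{lem:GS_contains_cycle} and~\ref{lem:lower_bound_minrank_GS} to get $\sum_{i\in[N]} r_i \geq N+|S|-2 \geq N+N_c-2$, and achievability applies Example~\ref{ex:simple_code_cycle} to a shortest directed cycle with uncoded transmission elsewhere. The only cosmetic differences are that your opening appeal to Lemma~\ref{lem:single_queries} (the $|\Mc|$ bound) is abandoned and plays no role in the final argument, and the paper derives $r\geq 2$ from $r \geq \ravg \geq (N+1)/N > 1$ together with integrality whereas you derive it directly from the $\sum_{i\in S} r_i \geq 2(|S|-1)$ inequality.
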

\begin{proof}
\emph{Converse:} Let $\pmb{A}$ be a fitting matrix corresponding to any valid scalar linear code for $G$ with rate $N-1$. Let \mbox{$\pmb{z} \in \mathcal{N}(\pmb{A}) \setminus \{\pmb{0}\}$} and $S=\supp(\pmb{z})$.
From Corollary~\ref{cor:lower_bound_r} and Lemma~\ref{lem:lower_bound_minrank_GS}, $\sum_{i \in S}r_i \geq 2\,\minrank(G_S) \geq 2(|S|-1)$. Using the trivial bound $r_i \geq 1$ for $i \notin S$, we have
\begin{align*}
\sum_{i \in [N]} r_i &= \sum_{i \in S} r_i + \sum_{i \notin S}r_i \\
&\geq 2(|S|-1) + N-|S| = N + |S| - 2. 
\end{align*} 
From Lemma~\ref{lem:GS_contains_cycle}, we know that $G_S$ contains a cycle, and hence, the number of vertices $|S|$ in $G_S$ is at least $N_c$. Thus,
\begin{equation*}
\ravg = \frac{\sum_{i \in [N]}r_i}{N} \geq \frac{N +|S|-2}{N} \geq \frac{N+N_c-2}{N}.
\end{equation*} 
Since $N_c \geq 3$, we have $r \geq \ravg \geq (N+1)/N$, and since $r$ is an integer we conclude that $r \geq 2$.

\emph{Achievability:} Let $C \subseteq [N]$ be the set of vertices that form the smallest directed cycle in $G$. Note that the subgraph $G_C$ is a directed cycle of length $|C|=N_c$. We encode the symbols $x_i$, $i \in C$, using the scalar linear code given in Example~\ref{ex:simple_code_cycle} and send the remaining $N-N_c$ symbols uncoded. This achieves the codelength $N-1$. From Example~\ref{ex:simple_code_cycle}, the sum locality within the cycle $\sum_{i \in C}r_i = 2(N_c-1)$ and the maximum locality within the cycle $\max_{i \in C}r_i=2$. The locality of the remaining receivers is $r_i=1$, $i \notin C$. This scheme achieves the optimal values of $r$ and $\ravg$ for rate $N-1$.
\end{proof}

\begin{remark}
Corollary~V.2 of~\cite{KSCF_arXiv_18} shows that, over the binary field $q=2$, if $\minrank(G)=N-1$, then a scalar linear coding rate of $N-1$ is achievable for any choice of locality $r \geq 2$. 
In contrast, our results show that $r=2$ is optimal (if $N_c \geq 3$) and also provide the optimal value of the average locality $\ravg$ for scalar linear codes over an arbitrary finite field $\Fb_q$.
\end{remark}


\appendices

\section{Proof of Theorem~\ref{thm:zeroes}} \label{app:proof:thm:zeroes}

We will design a new encoding matrix $\pmb{L}'$ by modifying the subset of the columns of the given matrix $\pmb{L}$ corresponding to the column indices $\Sc_1 \cup \cdots \cup \Sc_N$. 
For the remaining indices $k \in \Mc_1 \cup \cdots \cup \Mc_N$, the $k^\tth$ columns of $\pmb{L}$ and $\pmb{L}'$ are equal, i.e., $\pmb{L}_k = \pmb{L}'_k$. For an arbitrary $i \in [N]$, we will now explain the construction of the column vectors $\pmb{L}'_k$, $k \in \Sc_i$. Since the symbols $\pmb{x}^\tr\pmb{L}'_k$, $k \in \Sc_i$, are queried only by $\rx_i$ and are unused by other receivers, we only need to consider the constraints that are imposed by the demands of $\rx_i$ while designing the column vectors $\pmb{L}'_k$, $k \in \Sc_i$.

We will introduce the notation which will be used in the rest of the proof. For any $E \subseteq [MN]$, let $U_{E} = \spp(\pmb{e}_k, k \in E)$, i.e., $U_E$ is the subspace of all vectors whose support is a subset of $E$. 
For any $F \subseteq [\ell]$, let $V_F = \spp(\pmb{L}_k, k \in F)$ and $V'_F = \spp(\pmb{L}'_k, k \in F)$.
Since $R_i = \Sc_i \cup \Mc_i$, we have $V_{R_i} = V_{\Sc_i} + V_{\Mc_i}$ and $V'_{R_i} = V'_{\Sc_i} + V'_{\Mc_i}$, where the addition corresponds to sum of subspaces.
From Theorem~\ref{thm:design_criterion} and using the fact that $\pmb{L}$ is a valid encoder matrix, we have $\pmb{e}_j \in V_{R_i} + U_{\Kb_i}$, for all $j \in \Db_i$, i.e., we have 
\begin{equation} \label{eq:lem:zeroes}
U_{\Db_i} = \spp(\pmb{e}_j, j \in \Db_i) \subseteq V_{R_i} + U_{\Kb_i} = V_{\Sc_i} + V_{\Mc_i} + U_{\Kb_i}.
\end{equation} 
Again using Theorem~\ref{thm:design_criterion}, we observe that $\pmb{L}'$ allows $\rx_i$ to decode its demand if and only if 
\begin{equation} \label{eq:lem:zeroes:2}
U_{\Db_i} \subseteq V'_{R_i} + U_{\Kb_i} = V'_{\Sc_i} + V'_{\Mc_i} + U_{\Kb_i}.
\end{equation} 

Using the validity of the encoder matrix $\pmb{L}$, we will first lower bound $|\Sc_i|$ which is the number of coded symbols queried uniquely by $\rx_i$.
From~\eqref{eq:lem:zeroes}, we obtain
\begin{align} \label{eq:lem:zeroes:3}
U_{\Db_i} = (V_{\Sc_i} + V_{\Mc_i} + U_{\Kb_i}) \cap U_{\Db_i}.
\end{align} 
Let $\Vin = (V_{\Mc_i} + U_{\Kb_i}) \cap U_{\Db_i}$ denote the subspace of $V_{\Mc_i} + U_{\Kb_i}$ contained in $U_{\Db_i}$, and let $\Vout$ be any subspace such that $\Vout \cap U_{\Db_i} = \{\pmb{0}\}$ and $\Vin + \Vout = V_{\Mc_i} + U_{\Kb_i}$. Continuing from~\eqref{eq:lem:zeroes:3}, we claim that
\begin{align}
U_{\Db_i} &= (V_{\Sc_i} + \Vin + \Vout) \cap U_{\Db_i} \label{eq:lem:zeroes:4} \\
&= \left( (V_{\Sc_i} + \Vout) \cap U_{\Db_i} \right) + \Vin. \label{eq:lem:zeroes:5}
\end{align} 
It is clear that the subspace in~\eqref{eq:lem:zeroes:4} contains the subspace in~\eqref{eq:lem:zeroes:5} since $\Vin \subseteq U_{\Db_i}$. To prove that~\eqref{eq:lem:zeroes:4} is contained in~\eqref{eq:lem:zeroes:5}, assume that $\pmb{y} \in V_{\Sc_i} + \Vout$ and $\pmb{z} \in \Vin$ are such that $\pmb{y} + \pmb{z} \in U_{\Db_i}$. Since $\pmb{z} \in \Vin \subseteq U_{\Db_i}$ and $\pmb{y} + \pmb{z} \in U_{\Db_i}$, we conclude that $\pmb{y} \in U_{\Db_i}$ as well. Thus $\pmb{y} \in (V_{\Sc_i} + \Vout) \cap U_{\Db_i}$, and hence, $\pmb{y} + \pmb{z} \in \left((V_{\Sc_i} + \Vout) \cap U_{\Db_i}\right) + \Vin$.

Considering the dimensions of the subspaces in~\eqref{eq:lem:zeroes:5}, we have 
\begin{equation} \label{eq:lem:zeroes:6}
\dim(U_{\Db_i}) \leq \dim((V_{\Sc_i} + \Vout) \cap U_{\Db_i}) + \dim(\Vin).
\end{equation} 
In order to proceed with the proof of the theorem, we will now show that $\dim((V_{\Sc_i} + \Vout)\cap U_{\Db_i}) \leq |\Sc_i|$. To do so, assume that $\pmb{y}_j + \pmb{z}_j$, $j=1,\dots,n$, form a basis for $(V_{\Sc_i} + \Vout)\cap U_{\Db_i}$ where $\pmb{y}_j \in V_{\Sc_i}$ and $\pmb{z}_j \in \Vout$ and $n=\dim((V_{\Sc_i} + \Vout)\cap U_{\Db_i})$. 
If the scalars $\alpha_1,\dots,\alpha_n$ are such that $\sum_j \alpha_j \pmb{y}_j = \pmb{0}$, then
\begin{equation*}
 \sum_{j = 1}^{n} \alpha_j(\pmb{y}_j + \pmb{z}_j) = \sum_{j=1}^{n} \alpha_j \pmb{z}_j \in \Vout.
\end{equation*} 
Since $\pmb{y}_j + \pmb{z}_j \in U_{\Db_i}$, we also observe that $\sum_j \alpha_j(\pmb{y}_j + \pmb{z}_j) \in U_{\Db_i}$. Using the fact $\Vout \cap U_{\Db_i} = \{\pmb{0}\}$, we deduce that $\sum_j \alpha_j(\pmb{y}_j + \pmb{z}_j)=\pmb{0}$, and hence, $\alpha_1=\cdots=\alpha_n=0$. We conclude that $\pmb{y}_1,\dots,\pmb{y}_n \in V_{\Sc_i}$ are linearly independent, and therefore
\begin{equation} \label{eq:lem:zeroes:7}
\dim((V_{\Sc_i} + \Vout) \cap U_{\Db_i}) = n \leq \dim(V_{\Sc_i}) \leq |\Sc_i|.
\end{equation} 

Note that the columns of $\pmb{L}$ and $\pmb{L}'$ corresponding to the column indices $\Mc_i$ are equal, and hence $V_{\Mc_i}'=V_{\Mc_i}$. Using this fact together with~\eqref{eq:lem:zeroes:6} and~\eqref{eq:lem:zeroes:7}, we have
\begin{align*}
|\Sc_i| &\geq \dim((V_{\Sc_i} + \Vout)\cap U_{\Db_i}) \\
        &\geq \dim(U_{\Db_i}) - \dim(\Vin) \\
        &= \dim(U_{\Db_i}) - \dim((V_{\Mc_i} + U_{\Kb_i}) \cap U_{\Db_i}) \\
        &= \dim(U_{\Db_i}) - \dim((V'_{\Mc_i} + U_{\Kb_i}) \cap U_{\Db_i}).
\end{align*} 

From~\eqref{eq:lem:zeroes:2}, in order to satisfy the claim of this theorem, it is sufficient to chose the $|\Sc_i|$ vectors $\pmb{L}'_k$, $k \in \Sc_i$, such that $\pmb{L}'_k \lhd \Db_i$, i.e., $\pmb{L}'_k \in U_{\Db_i}$ and 
\begin{equation*}
 V'_{\Sc_i} \,+\, \left( (V'_{\Mc_i} + U_{\Kb_i}) \cap U_{\Db_i} \right) \supseteq U_{\Db_i}.
\end{equation*} 
This is always possible since the difference in the dimensions of $U_{\Db_i}$ and $(V'_{\Mc_i} + U_{\Kb_i}) \cap U_{\Db_i}$ is at the most $|\Sc_i|$. 
One way to construct $\pmb{L}'_k$, $k \in \Sc_i$, is as follows. We begin with a basis for $(V'_{\Mc_i} + U_{\Kb_i}) \cap U_{\Db_i}$. These vectors form a linearly independent set in $U_{\Db_i}$. We extend this set to a basis for $U_{\Db_i}$. The number of additional vectors in this basis is $\dim(U_{\Db_i}) - \dim((V'_{\Mc_i} + U_{\Kb_i}) \cap U_{\Db_i})$. These additional vectors together with $|\Sc_i|-\dim(U_{\Db_i}) + \dim((V'_{\Mc_i} + U_{\Kb_i}) \cap U_{\Db_i})$ all-zero vectors are chosen as the columns $\pmb{L}'_k$, $k \in \Sc_i$.



\end{document}